\documentclass[notitlepage,a4,11pt]{article}

\usepackage{graphicx}

\usepackage{amsmath}%
\usepackage{amsfonts}%
\usepackage{amssymb}%
\usepackage{mathrsfs}
\usepackage{amsthm}

\usepackage{authblk}


\newcommand{\mc}{\mathcal}
\newcommand{\cp}{\times}

\newcommand{\di}{\nabla\cdot}
\newcommand{\cu}{\nabla\times} 
\newcommand{\JI}[1]{\bol{#1}\cdot\cu{\bol{#1}}}
\newcommand{\BN}[1]{\bol{#1}\cp\left(\cu{\bol{#1}}\right)}

\newcommand{\bol}{\boldsymbol}
\newcommand{\hb}[1]{\hat{\bol{#1}}}
\newcommand{\abs}[1]{\left\lvert{#1}\right\rvert}
\newcommand{\w}{\wedge}
\newcommand{\lr}[1]{\left({#1}\right)}

\newcommand{\mf}{\mathfrak}
\newcommand{\p}{\partial}

\newcommand{\ti}[1]{\textit{#1}}
\newcommand{\tb}[1]{\textbf{#1}}
\newcommand{\ov}[1]{\mkern 1.5mu\overline{\mkern-1.5mu#1\mkern-1.5mu}\mkern 1.5mu}

\newtheorem{theorem}{\textit{Theorem}}
\newtheorem{proposition}{\textit{Proposition}}





\begin{document}
\title{Local Representation and Construction\protect\\ of 
Beltrami Fields II.\protect\\ Solenoidal Beltrami fields and ideal MHD equilibria}
\author[1]{N. Sato} \author[1]{M. Yamada}
\affil[1]{Research Institute for Mathematical Sciences, \protect\\ Kyoto University, Kyoto 606-8502, Japan \protect\\ Email: sato@kurims.kyoto-u.ac.jp}
\date{\today}
\setcounter{Maxaffil}{0}
\renewcommand\Affilfont{\itshape\small}


\maketitle

\begin{abstract}
Object of the present paper is the local theory of solution for 
steady ideal Euler flows and ideal MHD equilibria. 
The present analysis relies on the Lie-Darboux theorem of differential geometry  
and the local theory of representation and construction of Beltrami fields
developed in \cite{Sato}. 
A theorem for the construction of harmonic orthogonal coordinates is proved.
Using such coordinates families of solenoidal Beltrami fields with different topologies are obtained
in analytic form. 
Existence of global solenoidal Beltrami fields satisfying prescribed boundary conditions while
preserving the local representation is considered.
It is shown that only singular solutions are admissible, 
an explicit example is given in a spherical domain, and a theorem on existence of singular solutions is proven.
Local conditions for existence of solutions, and
local representation theorems are derived for generalized Beltrami fields, 
ideal MHD equilibria, and general steady ideal Euler flows. 
The theory is applied to construct analytic examples.
\end{abstract}





\section{Introduction}
Let $\bol{w}\in C^{\infty}({{\Omega}})$ be a smooth vector field in a smoothly bounded domain $\Omega\subset\mathbb{R}^3$ with boundary $\p\Omega$. 
Object of the present paper is the following system of first order partial differential equations:
\begin{equation}
\bol{w}\cp\lr{\nabla\cp\bol{w}}=\nabla\lr{P+\kappa\, \bol{w}^2}~~~~{\rm in}~~\Omega,\label{Beq}
\end{equation}
where $P\in C^{\infty}\lr{\Omega}$ and $\kappa\in\mathbb{R}$ are given.
System \eqref{Beq} will be examined both independently, and with the additional divergence free condition
\begin{equation}
\di\bol{w}=0~~~~{\rm in}~~\Omega.\label{Sol}
\end{equation}
The boundary condition 
\begin{equation}
\bol{w}\cdot\bol{n}=0~~~~{\rm on}~~\p\Omega,\label{BC}
\end{equation}
where $\bol{n}$ is the unit outward normal to $\p\Omega$, will also be considered.

In the context of fluid dynamics \cite{Moffatt14}, equation \eqref{Beq}, together with the continuity equation $\di\lr{\rho\bol{w}}=0$, represents a steady Euler flow: $\bol{w}$ is the fluid velocity, $\kappa=1/2$, and $\nabla P=\rho^{-1}\,\nabla p+\nabla\phi$, where $\rho$ is the fluid density, $p=p\lr{\rho}$ the barotropic pressure, and $\phi$ an external potential. For a plasma system, $\bol{w}$ is the solenoidal magnetic field, $\kappa=0$, and $-P$ is the plasma pressure. Then, equations \eqref{Beq} and \eqref{Sol} describe ideal magnetohydrodynamic equilibria \cite{Moffatt85}. 

An important class of solutions $\bol{w}$ to \eqref{Beq} can be obtained when the right-hand side vanishes,
i.e. $\nabla P=-\kappa\,\nabla\bol{w}^2$. In such case, a vector field $\bol{w}\neq\bol{0}$ is called
a Beltrami field and satisfies the Beltrami condition
\begin{equation}
\cu\bol{w}=\hat{h}\,\bol{w}~~~~{\rm in}~~\Omega.\label{Beq0}
\end{equation}
The function $\hat{h}\in C^{\infty}\lr{\Omega}$ is known as the proportionality factor (or coefficient).
Since a Beltrami field is an eigenvector of the curl operator \cite{Yoshida90},
the proportionality factor $\hat{h}$ can be identified with the associated eigenvalue. 

While the occurrence of Beltrami flows in steady fluids is a non-trivial physical problem,
Beltrami fields are essential in the description of relaxed plasma states \cite{Taylor74,Taylor86}.
Such discrepancy originates from the different degree of conservation between fluid helicity and magnetic helicity. Indeed, due to the higher order derivatives appearing in the fluid helicity, the fluid kinetic energy
is expected to dissipate at a slower rate, 
and fluid Beltrami equilibria are not properly described, in general, by minimization of energy under the constraint of helicity \cite{Yoshida02} (see  
\cite{Scheeler14} for measurements of the extent of helicity conservation in 
non-ideal fluid systems). 
The situation is opposite in the magnetic case: Beltrami fields with constant proportionality factor are consistent
with a relaxation process where magnetic energy is minimized 
by keeping the magnetic helicity constant \cite{Woltjer58}.

From a mathematical standpoint, Beltrami fields 
serve as a tool to investigate chaotic properties \cite{Dombre86} 
and topological behavior of fluid flows \cite{Enciso15}, 
as well as the statistical mechanics of magnetohydrodynamic systems \cite{Ito96}.
When, instead of fluid models, the motion of a single particle is considered,
Beltrami fields arise in the form of antisymmetric operators that
act on the particle Hamiltonian to induce dynamics. 
Such `Beltrami operators' exhibit special statistical properties,
namely that they enable a formulation of statistical mechanics
in the absence of canonical phase space (see \cite{Sato18}).

Although Beltrami fields find several applications due to their
occurrence in fluid and magnetofluid systems, 
the solvability of the Beltrami equation \eqref{Beq0} 
for given $\hat{h}\in C^{\infty}\lr{\Omega}$ poses a mathematical challenge.
First, a major distinction exists between the case of constant
proportionality factor, i.e. the case with $\nabla\hat{h}=\bol{0}$ in $\Omega$,
and the general situation in which $\hat{h}$ is a function in $\Omega$.
Observe that, if $\hat{h}$ is constant, then $\di\bol{w}=0$ when the Beltrami condition \eqref{Beq0} is satisfied.
In \cite{Yoshida90} it is proved that the Beltrami equation \eqref{Beq0} 
endowed with boundary condition \eqref{BC} and 
with constant proportionality factor $\hat{h}\in \mathbb{C}$ (note that $\hat{h}$ can be complex valued) always admits a nontrivial solution if $\Omega$ is a multiply connected smoothly bounded domain.
Here, a nontrivial solution is to be found in an appropriate subset
of the standard Sobolev space of order 1, $H^{1}\lr{\Omega}$. 
When $\hat{h}$ is allowed to be a function in the class $C^{6,\alpha}\lr{\Omega}$, it has been shown that the set of inhomogeneous
proportionality factors by which the Beltrami equation \eqref{Beq0} 
admits a solenoidal solution $\bol{w}\neq\bol{0}$, $\di\bol{w}=0$, has `measure' zero.
More precisely, $\bol{w}=\bol{0}$ for all $\hat{h}$ in a dense and open subset of $C^{7}\lr{\Omega}$ (see \cite{Enciso16}). If the solenoidal condition \eqref{Sol} is abandoned, weak solutions of system \eqref{Beq0} can be derived \cite{Kress77} (see also \cite{Kaiser00} and the discussion therein). However, when available, such solutions remain of pure mathematical interest.

If the solenoidal Beltrami field arising from equations \eqref{Beq0} and \eqref{Sol} is assumed to exhibit Euclidean symmetries, it is known that
equations \eqref{Beq0} and \eqref{Sol} reduce to a single nonlinear elliptic PDE
for the flux function, the Grad-Shafranov equation \cite{Grad58}.
The Grad-Shafranov equation can be derived also for system \eqref{Beq} together with \eqref{Sol} if $\kappa=0$, i.e. for ideal magnetohydrodynamic equilibria with non-vanishing pressure gradients and Euclidean symmetries \cite{Edenstrasser80_1,Edenstrasser80_2}. Thus, in the presence of suitable symmetries,
the Grad-Shafranov equation provides a viable way to obtain
solenoidal Beltrami fields in analytic or numerical form. 
We refer the reader to \cite{Tassi08} for analytic examples of symmetric solutions, and to \cite{Hudson07,Hudson17} for 
numerical examples of force free equilibria in the context of 
magnetic plasma confinement.

A systematic method to construct Beltrami fields in analytic form has been
obtained in \cite{Sato}. This result relies on a local solution of the
Beltrami equation \eqref{Beq0}: the solution is found in a small neighborhood $U\subset\Omega$ and boundary conditions are discarded.  
The analysis relies on the helicity density
\begin{equation}
h=\JI{w},\label{hel}
\end{equation}
of the vector field $\bol{w}$. $h$ characterizes the topological properties of $\bol{w}$. A vanishing helicity density $h=0$ 
implies that a smooth vector field $\bol{w}$ is integrable in the sense of the Frobenius' theorem \cite{Frankel}, and thus admits a local representation
\begin{equation}
\bol{w}=\lambda\,\nabla C~~~~{\rm in}~~U,\label{Fth}
\end{equation}
with $\lambda,C\in C^{\infty}\lr{U}$.
Since any nontrivial Beltrami field satisfies $\cu\bol{w}=\hat{h}\,\bol{w}$ with 
$\cu\bol{w}\neq\bol{0}$, it is clear that $h\neq 0$.
In \cite{Sato} it is shown that under such circumstances 
a Beltrami field has the local representation
\begin{equation}
\bol{w}=\cos\theta\,\nabla\psi+\sin\theta\,\nabla\ell~~~~{\rm in}~~\Omega,\label{BTh}
\end{equation}
with $\ell,\psi,\theta\in C^{\infty}\lr{U}$ a smooth coordinate system satisfying the following geometric conditions,
\begin{subequations}\label{GC}
\begin{align}
&\cos{\theta}~\sin{\theta}\lr{\abs{\nabla\psi}^2-\abs{\nabla\ell}^2}=\lr{\nabla\ell\cdot\nabla\psi}\lr{\cos^2{\theta}-\sin^2{\theta}},\\
&\sin{\theta}~\nabla\ell\cdot\nabla\theta+\cos{\theta}~\nabla\psi\cdot\nabla\theta=0,
\end{align}
\end{subequations}
The proof of \eqref{BTh} and \eqref{GC} relies on the Darboux theorem,
which gives a local expression for closed differential 2-forms \cite{DeLeon89,Arnold89,Salmon17}. 

The local representation \eqref{BTh} has a number of consequences.
First, Beltrami flows admit two local invariants, the plane of the flow $\theta$ and the angular momentum-like quantity $L_{\theta}=\ell \cos\theta-\psi\sin\theta$. Furthermore, if $\di\bol{w}=0$, it can be shown that the proportionality factor satisfies $\hat{h}=\hat{h}\lr{\theta,L_{\theta}}$. Hence, $\hat{h}$ becomes a constant of the motion itself.
Since $\hat{h}$ represents the helicity density of the
normalized vector field $\hb{w}/\abs{\bol{w}}$, i.e. $\hat{h}=\hb{w}\cdot\cu\hb{w}$,  
the conservation of $\hat{h}$ physically correspond to conservation of (normalized) helicity. This shows that, at least at a local level, inhomogeneous proportionality factors do not increase the number of conservation laws,  
thus explaining the so called helical flow paradox \cite{Morgulis95}.
Finally, the local representation \eqref{BTh} naturally leads to the
following method for the construction of Beltrami fields: 
from conditions \eqref{GC} one sees that a Beltrami field
in the form of \eqref{BTh} can be obtained by finding an orthogonal coordinate system $\lr{\ell,\psi,\theta}$ such that two scale factors are equal, $\abs{\nabla\ell}=\abs{\nabla\psi}$.    
If the coordinates $\ell$ and $\psi$ are harmonic,
the resulting Beltrami field is solenoidal.
We refer the reader to \cite{Sato} for a list of explicit examples.  

Aim of the present paper is to extend the results of \cite{Sato}
in the following ways: first, in section 2 we prove a theorem that 
provides harmonic orthogonal coordinate systems (a precise definition of such coordinates will be given later on).
These coordinates can be used to construct solenoidal Beltrami fields
according to the method of \cite{Sato}.
Examples of families of solenoidal Beltrami fields with different topologies are given in section 3. Here, the global problem is also considered.
More precisely, we examine the conditions under which 
the local representation \eqref{BTh} can be prolonged up 
to the boundary, so that boundary conditions \eqref{BC} are satisfied 
(in the following, we shall refer to a global solution in the sense
that it satisfies prescribed boundary conditions while maintaining its local representation).  
It is shown that the availability of a global Beltrami field 
is limited to singular solutions, i.e. solutions which are regular almost
everywhere in the domain of interest. An example of singular global solution
within a spherical domain is given.
Finally, the general system, equation \eqref{Beq}
is examined. Local conditions for existence of solutions
are derived. In section 4, the case of generalized Beltrami fields,
i.e. solutions of \eqref{Beq} for $\nabla P=\bol{0}$, is discussed.
In section 5, we consider ideal MHD equilibria, while
steady Euler flows are discussed in section 6.
In section 7 we draw our conclusions.

\section{Construction of solenoidal Beltrami fields}
Purpose of the present section is to derive a method to
obtain a set of orthogonal coordinates $\lr{\ell,\psi,\theta}\in C^{\infty}\lr{\Omega}$ with the property that
\begin{equation}
\abs{\nabla\ell}=\abs{\nabla\psi},~~~~\Delta\ell=0,~~~~\Delta\psi=0~~~~{\rm in}~~\Omega,\label{hC}
\end{equation} 
where $\Omega\subset\mathbb{R}^3$ is some appropriate domain.
We shall refer to an orthogonal coordinate system $\lr{\ell,\psi,\theta}$ satisfying \eqref{hC} as harmonic.
If such a coordinate system is found, then, from \cite{Sato}, we know that $\bol{w}=\cos\theta\,\nabla\psi+\sin\theta\,\nabla\ell$ is a solenoidal Beltrami field in $\Omega$.

Before giving the formal statement, it is useful to explain when and how
harmonic orthogonal coordinates can constructed. 
First, suppose that $\lr{\alpha,\beta,\gamma}\in C^{\infty}\lr{\Omega}$ 
is an orthogonal coordinate system in a domain $\Omega\subset\mathbb{R}^3$.  
Let $V$ denote the two-dimensional set $V=\left\{\lr{\alpha\lr{\bol{x}},\beta\lr{\bol{x}}}~|~\bol{x}\in U\right\}$. 
In $V$, consider the
two-dimensional Laplace equation
\begin{equation}
f_{\alpha\alpha}+f_{\beta\beta}=0~~~~{\rm in}~~V.\label{2Lap}
\end{equation}
Here, a lower index indicates a partial derivative.
If the domain $V$ has a sufficiently regular boundary, 
equation \eqref{2Lap} can be solved \cite{Gilbarg01,Evans10}.
The solution is a harmonic function $f\lr{\alpha,\beta}$.
Furthermore, if the domain $V$ is simply connected, from standard results of complex analysis, the function $f$
also admits a harmonic conjugate $g$ bearing the Cauchy-Riemann equations
\begin{equation}
f_{\alpha}=g_{\beta},~~~~f_{\beta}=-g_{\alpha}~~~~{\rm in}~~V.\label{CaRi}
\end{equation}
These relationships also imply that
\begin{equation}
f_{\alpha}^2+f_{\beta}^2=g_{\alpha}^2+g_{\beta}^2~~~~{\rm in}~~V.
\end{equation}
Hence, the functions $f$ and $g$ are harmonic and their gradients have equal length in the metric induced by $\alpha$ and $\beta$.   
Note that both $f$ and $g$ are well defined in the whole $\Omega$ since
they do not explicitly depend on the variable $\gamma$. 
Therefore, our original problem \eqref{hC} is solved if the properties 
of $f$ and $g$ remain valid in the whole $\Omega$, namely that
they are harmonic and the length of their gradients is equal also with respect to the Euclidean metric of $\mathbb{R}^3$.
Our task is to determine the geometric conditions on the coordinates
$\lr{\alpha,\beta,\gamma}$ that allow such transfer of properties. 
On this regard, we have the following:

\begin{theorem}\label{thm1}
Let $\lr{\zeta,\beta,\gamma}\in C^{\infty}\lr{\Omega}$ be a smooth orthogonal coordinate system in a smoothly bounded simply connected domain $\Omega\subset\mathbb{R}^3$. Assume that
\begin{equation}
\p_{\beta}\lr{\frac{\abs{\nabla\zeta}}{\abs{\nabla\beta}}}=\p_{\gamma}\lr{\frac{\abs{\nabla\zeta}}{\abs{\nabla\beta}}}=\p_{\zeta}\abs{\nabla\gamma}=\p_{\beta}\abs{\nabla\gamma}=0.\label{hyp1}
\end{equation}
Then, there exist smooth functions $\lr{\ell,\psi,\theta}\in C^{\infty}(\Omega)$ 
such that the vector field
\begin{equation}
\bol{w}=\cos\left[{\sigma\lr{\theta}}\right]\,\nabla\psi+\sin\left[{\sigma\lr{\theta}}\right]\,\nabla\ell,\label{solbf}
\end{equation}
is a smooth solenoidal Beltrami field in $\Omega$.
Here, $\sigma=\sigma\lr{\theta}$ is any smooth function of the variable $\theta$.
\end{theorem}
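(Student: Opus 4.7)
The plan is to produce harmonic orthogonal coordinates $\lr{\ell,\psi,\theta}$ in two moves: first reparametrize $\zeta$ so that the two Lam\'e coefficients associated with $\zeta$ and $\beta$ coincide, then transplant a planar holomorphic pair into $\Omega$ along the level sets of $\gamma$. Once such coordinates are in hand, the solenoidal Beltrami property of \eqref{solbf} follows from the construction of \cite{Sato} together with a short direct check accounting for the freedom $\sigma\lr{\theta}$.

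By the first two hypotheses in \eqref{hyp1}, $\abs{\nabla\zeta}/\abs{\nabla\beta}$ depends only on $\zeta$, so one can select a primitive $\alpha=F\lr{\zeta}$ with $F'\lr{\zeta}=\abs{\nabla\beta}/\abs{\nabla\zeta}$; then $\nabla\alpha\parallel\nabla\zeta$, the triple $\lr{\alpha,\beta,\gamma}$ remains orthogonal, and $\abs{\nabla\alpha}=\abs{\nabla\beta}$ throughout $\Omega$. Next I would pick a simply connected planar subdomain $V$ of the image of $\Omega$ under $\lr{\alpha,\beta}$, solve the two-dimensional Laplace equation \eqref{2Lap} on $V$, introduce the harmonic conjugate $g$ via \eqref{CaRi}, and lift $f$ and $g$ to $\Omega$ as functions independent of $\gamma$, setting $\psi:=f$, $\ell:=g$, $\theta:=\gamma$.

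The main technical obstacle is promoting the two-dimensional harmonic and conjugacy properties of $f,g$ to genuine Euclidean statements in $\Omega$. The essential input is the orthogonal-coordinate identity
\[
\Delta\alpha = \abs{\nabla\alpha}\abs{\nabla\beta}\abs{\nabla\gamma}\,\p_\alpha\!\lr{\frac{\abs{\nabla\alpha}}{\abs{\nabla\beta}\abs{\nabla\gamma}}},
\]
and its analogue for $\Delta\beta$: with $\abs{\nabla\alpha}=\abs{\nabla\beta}$ the argument of the outer derivative collapses to $1/\abs{\nabla\gamma}$, and the last two conditions in \eqref{hyp1} guarantee that $\abs{\nabla\gamma}$ is a function of $\gamma$ alone, so $\Delta\alpha=\Delta\beta=0$. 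Expanding $\Delta f = \di\lr{f_\alpha\nabla\alpha+f_\beta\nabla\beta}$ and invoking $\nabla\alpha\cdot\nabla\beta=0$, $\Delta\alpha=\Delta\beta=0$, and $\abs{\nabla\alpha}=\abs{\nabla\beta}$ then yields $\Delta f=\lr{f_{\alpha\alpha}+f_{\beta\beta}}\abs{\nabla\beta}^2=0$, and similarly $\Delta g=0$; a parallel calculation combined with \eqref{CaRi} delivers $\abs{\nabla\psi}=\abs{\nabla\ell}$ and $\nabla\psi\cdot\nabla\ell=0$, while orthogonality of $\nabla\psi$ and $\nabla\ell$ with $\nabla\theta=\nabla\gamma$ is inherited from $\lr{\alpha,\beta,\gamma}$.

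With these properties in place the geometric conditions \eqref{GC} hold trivially (both sides of the first reduce to zero, the second to $0=0$), so by the construction of \cite{Sato} the field $\cos\theta\,\nabla\psi+\sin\theta\,\nabla\ell$ is a solenoidal Beltrami field on $\Omega$. To allow the replacement of $\theta$ by $\sigma\lr{\theta}$ in \eqref{solbf}, one notes that $\nabla\sigma\lr{\theta}=\sigma'\lr{\theta}\nabla\gamma$ is orthogonal to both $\nabla\psi$ and $\nabla\ell$; thus $\di\bol{w}$ reduces to $\cos\sigma\,\Delta\psi+\sin\sigma\,\Delta\ell=0$, while a direct expansion of $\cu\bol{w}$ using the cross-product identities $\nabla\theta\cp\nabla\psi=-\abs{\nabla\theta}\nabla\ell$ and $\nabla\theta\cp\nabla\ell=\abs{\nabla\theta}\nabla\psi$ (valid precisely because $\abs{\nabla\psi}=\abs{\nabla\ell}$, up to a sign fixed by orientation) gives $\cu\bol{w}=\sigma'\lr{\theta}\abs{\nabla\theta}\,\bol{w}$, a function of $\theta$ alone multiplying $\bol{w}$, as required.
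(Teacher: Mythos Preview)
Your proposal is correct and follows essentially the same route as the paper: define $\alpha$ via $\alpha'=\abs{\nabla\beta}/\abs{\nabla\zeta}$ so that $\abs{\nabla\alpha}=\abs{\nabla\beta}$, pull a planar harmonic pair $f,g$ through the $(\alpha,\beta)$ chart, and set $(\ell,\psi,\theta)=(g,f,\gamma)$. The only cosmetic differences are that the paper reduces the full three--dimensional Laplacian in $(\zeta,\beta,\gamma)$ coordinates in one stroke to the equation $f_{\alpha\alpha}+f_{\beta\beta}=0$, whereas you first isolate the intermediate facts $\Delta\alpha=\Delta\beta=0$ and then expand $\Delta f$ by the product rule; and the paper simply records the proportionality factor $\hat h=\sigma_\theta\abs{\nabla\theta}$, while you verify $\cu\bol{w}=\sigma'(\theta)\abs{\nabla\theta}\,\bol{w}$ by the explicit cross--product identities. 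One small point: you should take $V$ to be the \emph{full} image of $\Omega$ under $(\alpha,\beta)$ (as the paper does), not a proper subdomain, since the theorem asks for $(\ell,\psi,\theta)\in C^\infty(\Omega)$.
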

\begin{proof}
The Laplacian $\Delta f$ of a function $f\in C^{\infty}\lr{\Omega}$ with respect to the standard Euclidean metric of $\mathbb{R}^3$ is defined by the identity
\begin{equation}
\mf{L}_{\nabla f}dx\w dy\w dz=\Delta f\,dx\w dy\w dz,
\end{equation}
where $\mf{L}$ is the Lie derivative. On the other hand, observe that
\begin{equation}
\begin{split}
\mf{L}_{\nabla f}&dx\w dy\w dz=di_{\nabla f}\,J^{-1}d\zeta\w d\beta\w d\gamma\\
=&d\left[J^{-1}\lr{\frac{f_{\zeta}}{\abs{\p_\zeta}^2}d\beta\w d\gamma-\frac{f_{\beta}}{\abs{\p_\beta}^2}d\zeta\w d\gamma+\frac{f_{\gamma}}{\abs{\p_\gamma}^2}d\zeta\w d\beta}\right]\\=&J\left[\frac{\p}{\p\zeta}\lr{\frac{\abs{\nabla\zeta}^2}{J}f_{\zeta}}
+\frac{\p}{\p\beta}\lr{\frac{\abs{\nabla\beta}^2}{J}f_{\beta}}
+\frac{\p}{\p\gamma}\lr{\frac{\abs{\nabla\gamma}^2}{J}f_{\gamma}}
\right]dx\w dy \w dz.
\end{split}
\end{equation}
Here, $J=\nabla\zeta\cdot\nabla\beta\cp\nabla\gamma$ is the Jacobian of
the coordinate change $\lr{\zeta,\beta,\gamma}\mapsto\lr{x,y,z}$.
Therefore,
\begin{equation}
\Delta f=J\left[\frac{\p}{\p\zeta}\lr{\frac{\abs{\nabla\zeta}^2}{J}f_{\zeta}}
+\frac{\p}{\p\beta}\lr{\frac{\abs{\nabla\beta}^2}{J}f_{\beta}}
+\frac{\p}{\p\gamma}\lr{\frac{\abs{\nabla\gamma}^2}{J}f_{\gamma}}
\right].
\end{equation}
If $f_{\gamma}=0$, Laplace's equation $\Delta f=0$ thus reduces to
\begin{equation}
\begin{split}
&\frac{\p}{\p\zeta}\lr{\frac{\abs{\nabla\zeta}^2}{J}f_{\zeta}}
+\frac{\p}{\p\beta}\lr{\frac{\abs{\nabla\beta}^2}{J}f_{\beta}}\\&=
\frac{\p}{\p\zeta}\lr{\frac{\abs{\nabla\zeta}}{\abs{\nabla\beta}\abs{\nabla\gamma}}f_{\zeta}}
+\frac{\p}{\p\beta}\lr{\frac{\abs{\nabla\beta}}{\abs{\nabla\zeta}\abs{\nabla\gamma}}f_{\beta}}=0~~~~{\rm in}~~\Omega.\label{D2}
\end{split}
\end{equation}
Here, the orientation of the coordinate system $\lr{\zeta,\beta,\gamma}$ is assumed to be such that $J/\abs{J}=1$.
By hypothesis, the metric coefficient $\abs{\nabla\gamma}\neq 0$ is independent of $\zeta$ and $\beta$, while $\abs{\nabla\zeta}/\abs{\nabla\beta}\neq 0$ is a function of $\zeta$. This implies that equation \eqref{D2} can be written as
\begin{equation}
\frac{\abs{\nabla\zeta}}{\abs{\nabla\beta}}\frac{\p}{\p\zeta}\lr{\frac{\abs{\nabla\zeta}}{\abs{\nabla\beta}}f_{\zeta}}+f_{\beta\beta}=0~~~~{\rm in }~~\Omega,\label{D3}
\end{equation}
Next, we define the smooth function
\begin{equation}
\alpha=\int{\frac{\abs{\nabla\beta}}{\abs{\nabla\zeta}}}\,d\zeta.
\end{equation}
Note that $\alpha$ is a strictly
monotonic function of $\zeta$. Therefore, it is invertible and can be used as coordinate in place of the variable $\zeta$. 
In terms of $\alpha$ and $\beta$ equation \eqref{D3} becomes
\begin{equation}
f_{\alpha\alpha}+f_{\beta\beta}=0~~~~{\rm in}~~V.\label{redlap}
\end{equation}
where the set $V=\left\{\lr{\alpha\lr{\bol{x}},\beta\lr{\bol{x}}}~|~\bol{x}\in \Omega\right\}$ is the image of the smooth coordinates $\lr{\alpha,\beta}$. 
Evidently, $V$ is a smoothly bounded simply connected domain. 

Observe that a solution $f$ of equation \eqref{redlap} automatically satisfies $\Delta f=0$ in $\Omega$ as a consequence of its equivalence with equation \eqref{D3}.
Since $V$ is simply connected, the smooth solution $f$ admits an harmonic conjugate $g$ such that $f_{\alpha}=g_{\beta}$ and $f_{\beta}=-g_{\alpha}$ in $V$, and $\Delta g=0$ in $\Omega$. It follows that, 
\begin{equation}
\abs{\nabla f}^2=f^{2}_{\alpha}\abs{\nabla\alpha}^2+f^2_{\beta}\abs{\nabla\beta}^2=g_{\beta}^2\alpha_{\zeta}^2\abs{\nabla\zeta}^2+g_{\zeta}^2\zeta_{\alpha}^2\abs{\nabla\beta}^2=\abs{\nabla g}^2~~~~{\rm in }~~\Omega.
\end{equation}
Here, the fact that $\alpha_{\zeta}=\zeta_{\alpha}^{-1}=\abs{\nabla\beta}/\abs{\nabla\zeta}$ was used.
This shows that, by setting $\lr{\ell,\psi,\theta}=\lr{f,g,\gamma}$, the gradients of the coordinates $\lr{\ell,\psi,\theta}$ are orthogonal to each other, and  $\abs{\nabla\ell}=\abs{\nabla\psi}$. 
It follows that the vector field \eqref{solbf} is a solenoidal Beltrami field in $\Omega$. Recalling that $\hat{h}=h/w^2$, the proportionality factor is $\hat{h}=\sigma_{\theta}\abs{\nabla\theta}$.
\end{proof}

The result of theorem \ref{thm1} can be rephrased as follows:
given 
an orthogonal coordinate system $\lr{\zeta,\beta,\gamma}$ with the property \eqref{hyp1}, it is always possible to determine an associated solenoidal
Beltrami field $\bol{w}=\cos\theta\,\nabla\psi+\sin\theta\,\nabla\ell$.
One can verify that several coordinate systems satisfy the requirements
of theorem \ref{thm1}. We will see that for this reason 
theorem \ref{thm1} can be used to construct
solenoidal Beltrami fields with different topologies in analytic form.

\section{Families of solenoidal Beltrami fields}
In this section, we apply theorem \ref{thm1} to construct solenoidal
Beltrami fields. 

First, observe that Cartesian, cylindrical, and spherical coordinates satisfy
the hypothesis of theorem \ref{thm1}.
In Cartesian coordinates $\lr{x,y,z}$ all scale factors are equal,
$\abs{\nabla x}=\abs{\nabla y}=\abs{\nabla z}=1$. Hence, any permutation of the
coordinates can be identified with $\lr{\zeta,\beta,\gamma}$. In particular, $\alpha=\zeta$, and, by setting $\lr{\zeta,\beta,\gamma}=\lr{x,y,z}$, theorem \ref{thm1} guarantees that the vector field
\begin{equation}
\bol{w}=\cos\left[ \sigma\lr{z}\right]\,\nabla\psi\lr{x,y}+\sin\left[ \sigma\lr{z}\right]\,\nabla\ell\lr{x,y},
\end{equation}
is a solenoidal Beltrami field provided that $\ell\lr{x,y}$ and $\psi\lr{x,y}$ are harmonic conjugate functions in the $\lr{x,y}$ variables. Examples
are $\ell=x$ and $\psi=y$, $\ell=-e^x \sin y$ and $\psi=e^x\cos{y}$, $\ell=\sinh y\sin x$ and $\psi=\cosh y \cos x$, or $\ell=\frac{\sin y}{\cosh x-\cos y}$ and $\psi=\frac{\sinh x}{\cosh x-\cos y}$. 
 
In cylindrical coordinates $\lr{r,\phi,z}=\lr{\sqrt{x^2+y^2},\arctan\lr{\frac{y}{x}},z}$, define $\lr{\zeta,\beta,\gamma}=\lr{r,\phi,z}$. Then,
\begin{equation}
\p_{\phi}\lr{\frac{\abs{\nabla r}}{\abs{\nabla\phi}}}=\p_{z}\lr{\frac{\abs{\nabla r}}{\abs{\nabla\phi}}}=\p_{r}\abs{\nabla z}=\p_{\phi}\abs{\nabla z}=0.
\end{equation} 
Furthermore, $\alpha=\int\frac{dr}{r}=\log{r}+c$, with $c\in\mathbb{R}$ a constant of integration. We set $c=0$.
It follows that the vector field
\begin{equation}
\bol{w}=\cos\left[\sigma\lr{z}\right]\,\nabla\psi\lr{\log r,\phi}+\sin\left[\sigma\lr{z}\right]\,\nabla\ell\lr{\log r,\phi},
\end{equation}
is a solenoidal Beltrami field provided that $\ell\lr{\log r,\phi}$ and
$\psi\lr{\log r,\phi}$ are harmoninc conjugate functions in the $\lr{\log r,\phi}$ variables. Examples are $\ell=\log r$ and $\psi=\phi$, $\ell=-r\sin\phi$ and $\psi=r\cos\phi$, $\ell=\sinh\lr{\log r}\sin\phi$ and $\psi=\cosh\lr{\log r}\cos\phi$, or $\ell=\frac{\sin\phi}{\cosh\lr{\log r}-\cos\phi}$ and $\psi=\frac{\sinh\lr{\log r}}{\cosh\lr{\log r}-\cos\phi}$.

Let $\Omega=S\subset\mathbb{R}^3$ be a sphere of radius $R_{0}>0$ centered at the origin $\bol{x}=\bol{0}$. We denote by
\begin{equation}
\lr{R,\vartheta,\phi}=\lr{\sqrt{x^2+y^2+z^2},\arccos\lr{\frac{z}{\sqrt{x^2+y^2+z^2}}},\arctan\lr{\frac{y}{x}}},
\end{equation}
the standard spherical coordinate system.
The scale factors are
\begin{equation}
\abs{\nabla R}^2=1,~~~~\abs{\nabla\vartheta}^2=\frac{1}{R^2},~~~~
\abs{\nabla\phi}^2=\frac{1}{R^2\sin^2\vartheta}.
\end{equation}
Observe that, by setting $\lr{\zeta,\beta,\gamma}=\lr{\vartheta,\phi,R}$, one has
\begin{equation}
\p_{\phi}\lr{\frac{\abs{\nabla \vartheta}}{\abs{\nabla\phi}}}=\p_{R}\lr{\frac{\abs{\nabla \vartheta}}{\abs{\nabla\phi}}}=\p_{\vartheta}\abs{\nabla R}=\p_{\phi}\abs{\nabla R}=0.
\end{equation}
The variable $\alpha$ can be evaluated to be 
\begin{equation}
\alpha\lr{\vartheta}=\int{\frac{d\vartheta}{\sin\vartheta}}=\log\lr{\frac{\sin\vartheta}{1+\cos\vartheta}}+c,
\end{equation}
with $c\in\mathbb{R}$ a constant of integration. We set $c=0$.
It follows that the vector field
\begin{equation}
\bol{w}=\cos\left[\sigma\lr{R}\right]\,\nabla\psi\lr{\alpha\lr{\vartheta},\phi}+\sin\left[\sigma\lr{R}\right]\,\nabla\ell\lr{\alpha\lr{\vartheta},\phi},\label{SB}
\end{equation}
is a solenoidal Beltrami field provided that $\ell\lr{\alpha\lr{\vartheta},\phi}$ and $\psi\lr{\alpha\lr{\vartheta},\phi}$ are harmonic conjugate functions in the variables $\lr{\alpha\lr{\vartheta},\phi}$. Examples are
$\ell=\alpha\lr{\vartheta}$ and $\psi=\phi$, $\ell=-\frac{\sin\vartheta\sin\phi}{1+\cos\vartheta}$ and $\psi=\frac{\sin\vartheta\cos\phi}{1+\cos\vartheta}$, $\ell=\sinh\alpha\lr{\vartheta}\sin\phi$ and $\psi=\cosh\alpha\lr{\vartheta}\cos\phi$, or $\ell=\frac{\sin\phi}{\cosh\alpha\lr{\vartheta}-\cos\phi}$ and $\psi=\frac{\sinh\alpha\lr{\vartheta}}{\cosh\alpha\lr{\vartheta}-\cos\phi}$.

Figure \ref{fig1} shows a plot of the vector field \eqref{SB} for
$\sigma\lr{R}=R$, $\ell\lr{\alpha\lr{\vartheta},\phi}=\phi$, and $\psi\lr{\alpha\lr{\vartheta},\phi}=\alpha\lr{\vartheta}$ on the level sets of the invariants $R$ and $L_{R}=\phi\cos R-\alpha\lr{\vartheta}\sin R$.

\begin{figure}[h]
\hspace*{-0cm}\centering
\includegraphics[scale=0.3]{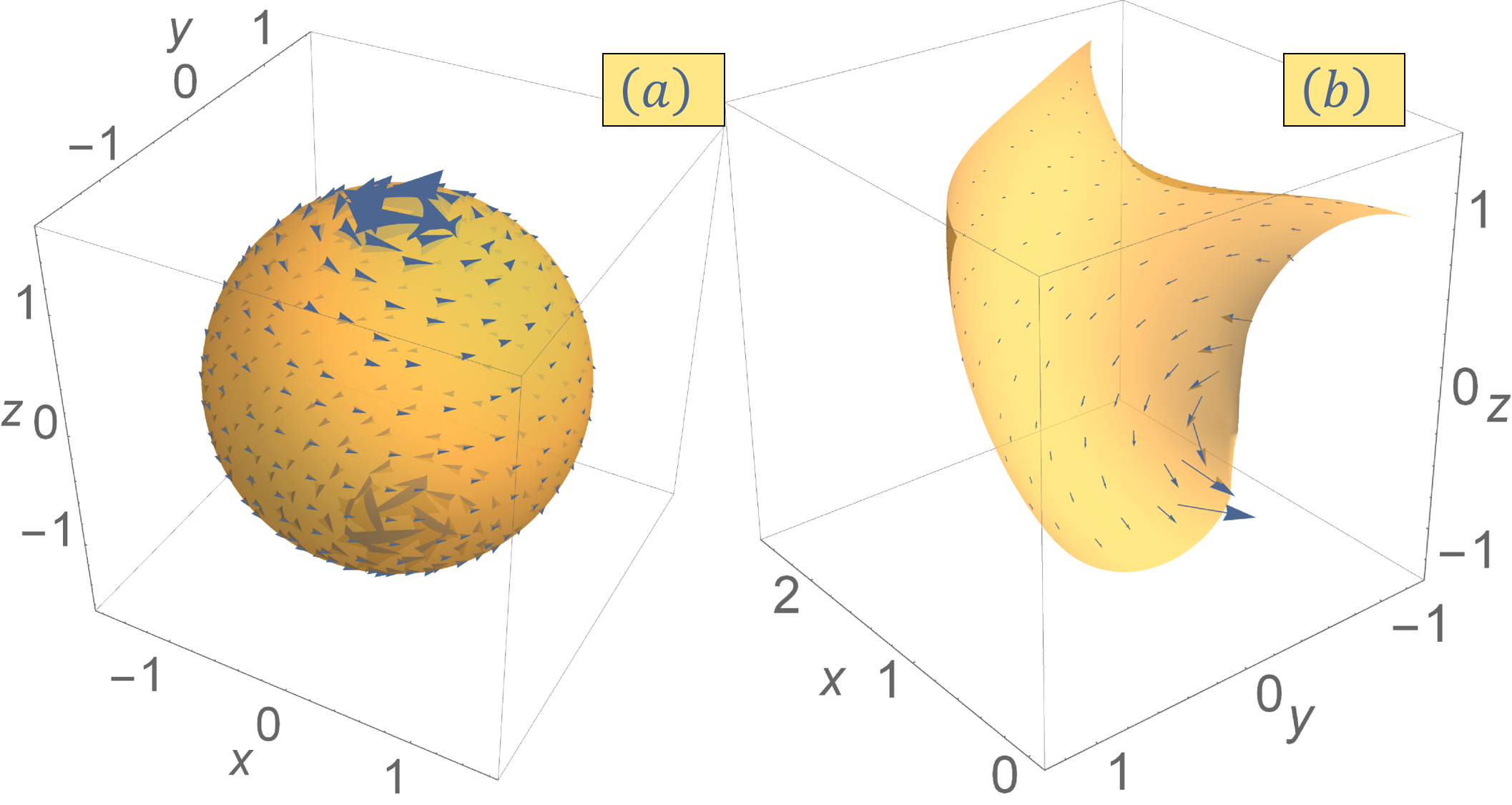}
\caption{\footnotesize (a): Plot of the solenoidal Beltrami field \eqref{SB} on the surface $R=1$. (b): Plot of the solenoidal Beltrami field \eqref{SB} on the surface $L_{R}=0.5$. Here, $\sigma\lr{R}=R$, $\ell\lr{\alpha\lr{\vartheta},\phi}=\phi$, and $\psi\lr{\alpha\lr{\vartheta},\phi}=\alpha\lr{\vartheta}$.}
\label{fig1}
\end{figure}

Since $\nabla\vartheta\cdot\nabla R=\nabla\phi\cdot\nabla R=0$, we have $\bol{w}\cdot\nabla R=0$. Hence, the vector field \eqref{SB} is orthogonal to
the unit outward normal $\bol{n}=\nabla R$ to the surface of the sphere $S$, i.e. it is a solution of the boundary value problem
\begin{equation}  
\begin{split}
&\BN{w}=\bol{0},~~~~\di\bol{w}=0~~~~{\rm in }~~S,\\
&\bol{w}\cdot\nabla R=0~~~~{\rm on}~~\p S.
\end{split}
\end{equation}
Here $\p S$ is the boundary of the sphere $S$.
This fact deserves some comments. 
At first glance, there may seem to be a contradiction because 
both $\vartheta$ and $\phi$ appear to be solutions of the boundary value problem
\begin{equation}
\Delta f=0~~~~{\rm in}~~S,~~~~\nabla f\cdot\nabla R=0~~~~{\rm on}~~\p S.\label{LS}
\end{equation}
Indeed, from standard results of functional analysis, regular
solutions of \eqref{LS} are unique up to constants.
A careful examination of the solutions $\phi$ and $\vartheta$ shows however
that no contradiction exists because the derived solutions are singular, specifically $\abs{\nabla\theta}^2=1/R^2$ diverges when $R\rightarrow 0$, and $\abs{\nabla\phi}^2=1/R^2\sin^2\theta$ diverges when $R\rightarrow 0$ or $\sin\vartheta\rightarrow 0$.

This result suggests a method to construct singular solutions to the boundary value problem
\begin{equation}  
\begin{split}
&\BN{w}=\bol{0},~~~~\di\bol{w}=0~~~~{\rm in }~~\Omega,\\
&\bol{w}\cdot\nabla \nu\lr{\theta,L_{\theta}}=0~~~~{\rm on}~~\p \Omega,\label{BVP}
\end{split}
\end{equation}
where $\Omega$ is a smoothly bounded simply connected domain with unit outward normal $\bol{n}=\nabla\nu\lr{\theta,L_{\theta}}/\abs{\nabla\nu\lr{\theta,L_{\theta}}}$. Here, $\nu\lr{\theta,L_{\theta}}\in C^{\infty}\lr{\Omega}$ is a smooth function of the invariants $\theta$ and $L_{\theta}$ of the Beltrami flow $\bol{w}=\cos\theta\,\nabla\psi+\sin\theta\,\nabla\ell$.
Indeed, while there is no hope of finding two distinct regular solutions $\ell$ and $\psi$ to the boundary value problem
\begin{equation}
\Delta f=0~~~~{\rm in}~~\Omega,~~~~\nabla f\cdot\nabla \nu\lr{\theta,L_{\theta}}=0~~~~{\rm on}~~\p\Omega,
\end{equation}
distinct singular solutions may exist, as in the spherical case. 
By singular solution, we mean a solution that is regular in the domain of interest, exception made for a set of measure zero. 
We have:

\begin{proposition}\label{prop1}
Let $\Omega\subset\mathbb{R}^3$ be a smoothly bounded
simply connected domain with boundary $\p\Omega$. Let $\bol{n}=\nabla\nu/\abs{\nabla\nu}$ be the unit outward normal to $\p\Omega$, with $\nu\in C^{\infty}\lr{\Omega'}$ and $\Omega\subset\subset\Omega'\subset\mathbb{R}^3$. Suppose that there exist 
orthogonal coordinates $\lr{\ell,\psi,\theta}$ smooth almost everywhere in $\Omega'$ such that
\begin{subequations}\label{SingBVP}
\begin{align}
&\Delta\ell=0~~~~{\rm a.e.~~in}~~\Omega,~~~~\nabla\ell\cdot\nabla\nu=0~~~~{\rm a.e.~~on}~~\p\Omega,\label{p1eq1}\\
&\Delta\psi=0~~~~{\rm a.e.~~in}~~\Omega,~~~~\nabla\psi\cdot\nabla\nu=0~~~~{\rm a.e.~~on}~~\p\Omega,\label{p1eq2}\\
&\abs{\nabla\ell}=\abs{\nabla\psi}~~~~{\rm a.e.~~in}~~\Omega,\\
&\nu=\nu\lr{\theta,L_{\theta}}~~~~{\rm in}~~\Omega',
\end{align}
\end{subequations}
with $L_{\theta}=\ell\cos\theta-\psi\sin\theta$. 
Then, the vector field
\begin{equation}
\bol{w}=\cos\left[\sigma\lr{\theta}\right]\,\nabla\psi+\sin\left[\sigma\lr{\theta}\right]\,\nabla\ell,\label{P1BF}
\end{equation}
is a singular solution of the boundary value problem \eqref{BVP}.
\end{proposition}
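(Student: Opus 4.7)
The plan is to verify the three defining conditions of the boundary value problem \eqref{BVP} one by one, in the almost-everywhere sense. Each reduces to a direct pointwise calculation on the set where the coordinates $\lr{\ell,\psi,\theta}$ are smooth, using the orthogonality of the frame together with the hypotheses in \eqref{SingBVP}. The argument is essentially that of Theorem \ref{thm1} re-run pointwise, with added bookkeeping for the boundary.

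For the Beltrami condition, I would expand $\cu\bol{w}=-\sin[\sigma\lr{\theta}]\sigma_{\theta}\,\nabla\theta\cp\nabla\psi+\cos[\sigma\lr{\theta}]\sigma_{\theta}\,\nabla\theta\cp\nabla\ell$ and apply the orthogonal-frame identities $\nabla\theta\cp\nabla\psi=-\lr{J/\abs{\nabla\ell}^2}\nabla\ell$ and $\nabla\theta\cp\nabla\ell=\lr{J/\abs{\nabla\psi}^2}\nabla\psi$, where $J=\nabla\ell\cdot\nabla\psi\cp\nabla\theta$. Combined with $\abs{\nabla\ell}=\abs{\nabla\psi}$ a.e., this collapses to $\cu\bol{w}=\hat{h}\bol{w}$ with $\hat{h}=J\sigma_{\theta}/\abs{\nabla\ell}^2$, hence $\bol{w}\cp\lr{\cu\bol{w}}=\bol{0}$ a.e. in $\Omega$. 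For solenoidality, the computation $\di\bol{w}=\cos[\sigma\lr{\theta}]\Delta\psi+\sin[\sigma\lr{\theta}]\Delta\ell+\sigma_{\theta}\left[-\sin[\sigma\lr{\theta}]\nabla\theta\cdot\nabla\psi+\cos[\sigma\lr{\theta}]\nabla\theta\cdot\nabla\ell\right]$ reduces, by orthogonality, to $\cos[\sigma\lr{\theta}]\Delta\psi+\sin[\sigma\lr{\theta}]\Delta\ell$, which vanishes a.e. in $\Omega$ by \eqref{p1eq1} and \eqref{p1eq2}.

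For the boundary condition, linearity of the inner product gives $\bol{w}\cdot\nabla\nu=\cos[\sigma\lr{\theta}]\,\nabla\psi\cdot\nabla\nu+\sin[\sigma\lr{\theta}]\,\nabla\ell\cdot\nabla\nu$, which vanishes a.e. on $\p\Omega$ by the boundary parts of \eqref{p1eq1} and \eqref{p1eq2}. The hypothesis $\nu=\nu\lr{\theta,L_{\theta}}$ is not strictly required for this direct verification; it serves the conceptual role of tying the boundary geometry to the natural invariants of the Beltrami flow, ensuring that the constraint $\bol{w}\cdot\bol{n}=0$ is geometrically compatible with the structure of the field produced by \eqref{P1BF}.

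The main obstacle, and the reason for the \emph{singular} qualifier in the statement, is control of the set on which $\lr{\ell,\psi,\theta}$ fail to be smooth. The identities above are purely pointwise, so $\bol{w}$ is regular and satisfies the full system on the complement of this set. To promote the almost-everywhere statements to the conclusion of the proposition one needs the singular set to have measure zero in $\Omega$ and vanishing surface measure on $\p\Omega$; both are built into the hypothesis that the coordinates are smooth almost everywhere on the slightly larger open set $\Omega\subset\subset\Omega'$ on which $\bol{n}=\nabla\nu/\abs{\nabla\nu}$ is a well-defined smooth field, so that the exceptional set does not accumulate on the boundary in a pathological way.
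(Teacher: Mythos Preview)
Your proof is correct and follows the same direct verification of the three conditions that the paper uses. The one small divergence is in the boundary step: the paper invokes the invariants $\theta$ and $L_{\theta}$ of the Beltrami flow together with the hypothesis $\nu=\nu\lr{\theta,L_{\theta}}$ to conclude $\bol{w}\cdot\nabla\nu=0$, whereas you go straight through the boundary orthogonality in \eqref{p1eq1}--\eqref{p1eq2}; your route is slightly cleaner since it sidesteps the issue that, for general $\sigma$, the quantity $L_{\theta}$ is not literally an invariant of \eqref{P1BF}.
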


\begin{proof}
Since the coordinates $\lr{\ell,\psi,\theta}$ are orthogonal and the
scale factors $\abs{\nabla\ell}$ and $\abs{\nabla\psi}$ are equal, 
the vector field \eqref{P1BF} is a Beltrami field almost everywhere in $\Omega$.
Since such Beltrami field is endowed with the invariants $\theta$ and $L_{\theta}$, it also satisfy the boundary condition $\bol{w}\cdot\nabla\nu\lr{\theta,L_{\theta}}=0$ almost everywhere on $\p\Omega$.
Equations \eqref{p1eq1} and \eqref{p1eq2} guarantee that the Beltrami field is solenoidal almost everywhere in $\Omega$.
\end{proof}

Proposition \ref{prop1} can be combined with theorem \ref{thm1} 
to obtain the following result regarding the existence of singular solutions
to the boundary value problem \eqref{BVP}:

\begin{theorem}\label{thm2}
Let $\lr{\zeta,\beta,\gamma}\in C^{\infty}\lr{\Omega}$ be a smooth orthogonal coordinate system in a
smoothly bounded simply connected domain $\Omega\subset\mathbb{R}^3$ satisfying the hypothesis of theorem \ref{thm1}. Suppose that the unit outward normal $\bol{n}$ to the boundary $\p\Omega$ is such that $\bol{n}=\nabla\nu/\abs{\nabla\nu}$, where $\nu\in C^{\infty}\lr{\Omega'}$, $\Omega\subset\subset\Omega'$, is a smooth function of $\theta$ and $L_{\theta}$, $\nu=\nu\lr{\theta,L_{\theta}}$, $\lr{\ell,\theta,\psi}\in C^{\infty}\lr{\Omega'}$ the orthogonal coordinate system obtained from $\lr{\zeta,\beta,\gamma}$ according to the procedure of theorem \ref{thm1}, and $L_{\theta}=\ell\cos\theta-\psi\sin\theta$. Then, the vector field \eqref{solbf} is a family of singular solutions to the boundary value problem \eqref{BVP}.
\end{theorem}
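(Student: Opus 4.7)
The plan is to stack the two earlier results: Theorem \ref{thm1} produces the harmonic orthogonal triple $(\ell,\psi,\theta)$ out of $(\zeta,\beta,\gamma)$, and Proposition \ref{prop1} then packages $(\ell,\psi,\theta,\nu)$ into a singular solution of the boundary value problem \eqref{BVP}. The whole proof reduces to checking that every hypothesis of Proposition \ref{prop1} is inherited either from Theorem \ref{thm1} or from the shape assumption $\nu=\nu(\theta,L_\theta)$.

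First, I apply Theorem \ref{thm1} to $(\zeta,\beta,\gamma)$ on the enlarged simply connected domain $\Omega'$. This yields harmonic conjugates $(\ell,\psi)=(f,g)$ in the auxiliary variables $(\alpha,\beta)$ together with $\theta=\gamma$. By the conclusions of Theorem \ref{thm1}, $(\ell,\psi,\theta)$ is orthogonal, $\Delta\ell=\Delta\psi=0$ in $\Omega$, and $|\nabla\ell|=|\nabla\psi|$, all wherever the coordinate change is a local diffeomorphism. The derived map can degenerate on a set of measure zero — as in the spherical illustration of Section 3, where the polar axis $\sin\vartheta=0$ and the origin $R=0$ must be excluded — and it is precisely this degeneracy that accounts for the a.e.\ qualifier of Proposition \ref{prop1} and for the word ``singular'' in the conclusion of Theorem \ref{thm2}.

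Second, I verify the boundary identities \eqref{p1eq1}--\eqref{p1eq2}. Because $\p\Omega$ is a level set of $\nu=\nu(\theta,L_\theta)$, expanding
\begin{equation*}
\nabla\nu=\nu_\theta\,\nabla\theta+\nu_{L_\theta}\bigl[\cos\theta\,\nabla\ell-\sin\theta\,\nabla\psi-(\ell\sin\theta+\psi\cos\theta)\,\nabla\theta\bigr]
\end{equation*}
and contracting with $\bol{w}$ from \eqref{solbf}, the orthogonality of $\nabla\theta$ against $\nabla\ell$ and $\nabla\psi$ combined with $|\nabla\ell|=|\nabla\psi|$ yields $\bol{w}\cdot\nabla\nu=\nu_{L_\theta}\,\bol{w}\cdot\nabla L_\theta=0$ almost everywhere, since $\theta$ and $L_\theta$ are the local invariants of the Beltrami flow recalled in the introduction. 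This delivers the boundary condition of the BVP \eqref{BVP} in its intrinsic form $\bol{w}\cdot\nabla\nu=0$ on $\p\Omega$, which is the geometric content of the two separate identities $\nabla\ell\cdot\nabla\nu=0$ and $\nabla\psi\cdot\nabla\nu=0$ demanded by Proposition \ref{prop1}.

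Third, with orthogonality of $(\ell,\psi,\theta)$, harmonicity of $\ell$ and $\psi$, equality of their gradient norms, the prescribed form $\nu=\nu(\theta,L_\theta)$, and the boundary orthogonality relations all in hand, I invoke Proposition \ref{prop1} directly to obtain \eqref{solbf} as a singular solution of \eqref{BVP}. Since the free function $\sigma$ enters only through the angular phase and is untouched by the geometric constraints, one gets a full family of such solutions parametrized by $\sigma$. The main obstacle is bookkeeping rather than conceptual: one must pin down the measure-zero singular locus of the derived coordinate map $(\ell,\psi,\theta)\mapsto(x,y,z)$, and confirm that the harmonicity in the interior and the orthogonality against $\nabla\nu$ on the boundary persist off this set, so that the a.e.\ hypotheses of Proposition \ref{prop1} are genuinely met by the coordinates produced by Theorem \ref{thm1}. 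Once this is settled, Theorem \ref{thm2} is a direct synthesis of the two earlier results.
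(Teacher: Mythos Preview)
Your proposal is correct and follows exactly the route the paper takes: the paper does not give a separate proof of Theorem~\ref{thm2} but simply presents it as the combination of Theorem~\ref{thm1} and Proposition~\ref{prop1}, which is precisely what you do. Your added verification of $\bol{w}\cdot\nabla\nu=0$ via the invariants $\theta$ and $L_\theta$ mirrors the argument inside the proof of Proposition~\ref{prop1} itself, so nothing new is required.
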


In practice, theorem \ref{thm2} is applied as follows: first, given a smoothly bounded simply connected domain $\Omega$, one identifies the function $\nu$ whose gradient defines the normal direction to the bounding surface $\p\Omega$. Assuming $\nu=\gamma$, one completes the function $\gamma$ to an orthogonal set $\lr{\zeta,\beta,\gamma}$. If the topology of the domain $\Omega$ is compatible with the requirements of theorem \ref{thm1} on the metric coefficients of the coordinate system $\lr{\zeta,\beta,\gamma}$, singular solutions of the
boundary value problem \eqref{BVP} are available in the form of equation \eqref{solbf}. This approach was used to construct the spherical example discussed above.

We now return to the spherical example with $\ell=\phi$ and $\psi=\alpha\lr{\vartheta}$. Since the spherical radius $R$ is an invariant, $\bol{w}\cdot\nabla R=0$, a fluid element moving according to the spherical Beltrami flow \eqref{SB} will remain on the same spherical surface at all times $t\geq 0$.
On the surface $S_{0}=\{\bol{x}\in\mathbb{R}~\rvert~R=R_{0}>0\}$ one can identify the singular points $\bol{x}_{\pm}=\lr{0,0,\pm R_{0}}$.
At $\bol{x}_{\pm}$, which correspond to the angular values $\theta=\left\{0,\pi\right\}$, the modulus of $\bol{w}$ diverges:
\begin{equation}
\lim_{\bol{x}\rightarrow\bol{x}_{\pm}}\bol{w}^2=\lim_{\bol{x}\rightarrow\bol{x}_{\pm}}\abs{\nabla\psi}^2=\lim_{\bol{x}\rightarrow\bol{x}_{\pm}}\frac{\psi_{\alpha}^2+\psi^2_{\phi}}{R^2\sin^2\vartheta}=\lim_{\vartheta\rightarrow 0}\frac{1}{R^2_{0}\sin^2\vartheta}=+\infty.
\end{equation}  
Given a flow with velocity $\bol{w}$, it is useful to consider which initial conditions of the invariants $R$ and $L_{R}$ lead to the singular points $\bol{x}_{\pm}$. Observe that
\begin{equation}
L_{R}=\phi\cos R-\log\lr{\frac{\sin\vartheta}{1+\cos\vartheta}}\sin R.
\end{equation}
On the other hand,
\begin{equation}
\lim_{\vartheta\rightarrow 0}\log\lr{\frac{\sin\vartheta}{1+\cos\vartheta}}=-\infty,~~~~\lim_{\vartheta\rightarrow \pi}\log\lr{\frac{\sin\vartheta}{1+\cos\vartheta}}=+\infty.
\end{equation}
It follows that, whenever $\sin R\neq 0$, 
\begin{equation}
\lim_{\bol{x}\rightarrow\bol{x}_{\pm}}\abs{L_{R}}=+\infty.
\end{equation}
Therefore, the set of initial conditions allowing the singular points $\bol{x}_{\pm}$ has measure zero and is limited to an infinite value of the angular momentum-like quantity $L_{R}$. 
This result remains true for singularities of the variables $\ell$ and $\psi$ in general. Indeed, if the set of singularities of the variables $\ell$ and $\psi$ has measure zero, so is the measure of the corresponding set of values of the invariant $L_{\theta}=\ell\cos\theta-\psi\sin\theta$.

We conclude this section with an example where the procedure of theorem \ref{thm1} does not apply. Consider a toroidal coordinate system
\begin{equation}
\lr{\tau,\eta,\phi}=\lr{\sqrt{\lr{1-r}^2+z^2},\frac{z}{r-1},\phi}.
\end{equation}
Here, $\lr{r,\phi,z}$ is the cylindrical coordinate system introduced above, $\tau$ is the radial distance from the center of the torus located at $\lr{r,\phi,z}=\lr{1,\phi,0}$, and $\eta$ is an angle-like coordinate in the planes $\phi={\rm constant}$. 

Notice that such toroidal coordinates form an orthogonal set.
One has
\begin{equation}
\abs{\nabla\tau}^2=1,~~~~\abs{\nabla\eta}^2=\frac{\lr{1+\eta^2}^2}{\tau^2},~~~~\abs{\nabla\phi}^2=\frac{1}{\lr{1\pm\frac{\tau}{\sqrt{1+\eta^2}}}^2}.
\end{equation}
Here, the sign $\pm$ in the last formula depends on the sign of $r-1$, since $\lr{r-1}^2=\tau^2/\lr{1+\eta^2}$. 
Therefore, the only variable with the property that the corresponding scale factor is independent of the other coordinates is $\tau$. However, the ratio $\abs{\nabla\eta}/\abs{\nabla\phi}$ depends on the variables $\tau$ and $\eta$, and equation \eqref{hyp1} cannot be satisfied.

\section{Generalized Beltrami fields}

In this section we examine equation \eqref{Beq} when $\nabla P=\bol{0}$, i.e.
\begin{equation}
\BN{w}=\kappa\,\nabla\bol{w}^2~~~~{\rm in}~~\Omega.\label{Eq4_1}
\end{equation}
Observe that any solution of \eqref{Eq4_1} such that $\kappa\,\nabla\bol{w}^2\neq \bol{0}$ satisfies $\bol{w}\cdot\nabla\bol{w}^2=\cu\bol{w}\cdot\nabla\bol{w}^2=0$. Hence $\bol{w}^2$ is simultaneously an invariant of the flows generated by $\bol{w}$ and $\cu\bol{w}$. It will be shown that solving \eqref{Eq4_1} with $\kappa\neq 0$ is tantamount to finding a Beltrami field. For this reason and due to the fact that $\kappa=0$ corresponds to the standard Beltrami case, we shall refer to solutions of \eqref{Eq4_1} as generalized Beltrami fields. 
In the following, the notation $w=\abs{\bol{w}}$ is used.  

\begin{proposition}\label{propEq4_1}
Let $\bol{w}\in C^{\infty}\lr{\Omega}$ be a smooth vector field in a bounded domain $\Omega\subset\mathbb{R}^3$. Assume $\bol{w}\neq\bol{0}$ in $\Omega$ and $\kappa\neq 0$. Then, $\bol{w}$ is a solution of \eqref{Eq4_1} if and only if
\begin{equation}
\frac{\bol{w}}{w^{2\kappa}}\cp\left[{\cu\lr{\frac{\bol{w}}{w^{2\kappa}}}}\right]=\bol{0},~~~~\bol{w}\cdot\nabla w=0~~~~{\rm in}~~\Omega.\label{Eq4_2}
\end{equation}
\end{proposition}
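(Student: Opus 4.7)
The plan is to set $\bol{v}=\bol{w}/w^{2\kappa}$ and derive a single algebraic identity connecting $\bol{v}\cp(\cu\bol{v})$, $\bol{w}\cp(\cu\bol{w})$, $\kappa\,\nabla w^2$, and $(\bol{w}\cdot\nabla w)\bol{w}$; both implications will then fall out by inspection.

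First I would use the product rule $\cu(f\bol{w})=f\,\cu\bol{w}+\nabla f\cp\bol{w}$ with $f=w^{-2\kappa}$, giving
\begin{equation*}
\cu\bol{v}=w^{-2\kappa}\,\cu\bol{w}-2\kappa\,w^{-2\kappa-1}\,\nabla w\cp\bol{w}.
\end{equation*}
Taking the cross product on the left with $\bol{v}=w^{-2\kappa}\bol{w}$ and expanding the double cross product via $\bol{w}\cp(\nabla w\cp\bol{w})=w^2\nabla w-(\bol{w}\cdot\nabla w)\bol{w}$, together with $\nabla w^2=2w\,\nabla w$, yields the key identity
\begin{equation*}
\bol{v}\cp(\cu\bol{v})=w^{-4\kappa}\bigl[\bol{w}\cp(\cu\bol{w})-\kappa\,\nabla w^2\bigr]+2\kappa\,w^{-4\kappa-1}(\bol{w}\cdot\nabla w)\,\bol{w}.
\end{equation*}

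For the ``only if'' direction, assuming \eqref{Eq4_1} holds I would dot both sides with $\bol{w}$; since $\bol{w}\cdot[\bol{w}\cp(\cu\bol{w})]=0$ and $\bol{w}\cdot\nabla w^2=2w(\bol{w}\cdot\nabla w)$, the hypotheses $\kappa\neq 0$ and $w\neq 0$ force $\bol{w}\cdot\nabla w=0$. Substituting both facts into the identity above annihilates the right-hand side, proving the first equation in \eqref{Eq4_2}. For the ``if'' direction, the second condition $\bol{w}\cdot\nabla w=0$ kills the last term of the identity, and the vanishing of $\bol{v}\cp(\cu\bol{v})$ then gives $\bol{w}\cp(\cu\bol{w})=\kappa\,\nabla w^2$ after multiplying through by $w^{4\kappa}$.

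The only real obstacle is bookkeeping: keeping the exponents of $w$ straight and applying the BAC-CAB expansion correctly. Once the identity is in place both directions are immediate, and the condition $\bol{w}\neq\bol{0}$ is used solely to guarantee $w>0$ so that $w^{-2\kappa}$ is well defined and the multiplication by $w^{4\kappa}$ in the reverse direction is legitimate.
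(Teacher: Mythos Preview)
Your proof is correct and follows essentially the same route as the paper: both derive the algebraic identity linking $\bol{v}\cp(\cu\bol{v})$ to $w^{-4\kappa}[\bol{w}\cp(\cu\bol{w})-\kappa\,\nabla w^{2}]$ plus a term proportional to $(\bol{w}\cdot\nabla w)\bol{w}$ (the paper packages that extra term as $\hat{\bol{w}}\cp(\nabla w^{2}\cp\hat{\bol{w}})$), and then read off both implications. Your write-up is in fact a bit more explicit than the paper's, since you spell out how dotting \eqref{Eq4_1} with $\bol{w}$ forces $\bol{w}\cdot\nabla w=0$ in the forward direction.
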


\noindent The equivalence between \eqref{Eq4_1} and \eqref{Eq4_2} when $\kappa\neq 0$ can be verified by observing that
\begin{equation}
\frac{\bol{w}}{w^{2\kappa}}\cp\left[{\cu\lr{\frac{\bol{w}}{w^{2\kappa}}}}\right]=w^{-4\kappa}\left[{\BN{w}}-\kappa\,\hb{w}\cp\lr{\nabla w^{2}\cp\hb{w}}\right],
\end{equation}
where $\hb{w}=\bol{w}/w$. Whenever $\bol{w}\cdot\nabla w=0$, we have $\hb{w}\cp\lr{\nabla w^2\cp\hb{w}}=\nabla w^2$, and thus
\begin{equation}
\frac{\bol{w}}{w^{2\kappa}}\cp\left[{\cu\lr{\frac{\bol{w}}{w^{2\kappa}}}}\right]=w^{-4\kappa}\left[{\BN{w}}-\kappa\,\nabla w^{2}\right].
\end{equation}

This result enables one to apply the techniques for the construction of Beltrami fields developed in \cite{Sato} in order to obtain solutions of equation \eqref{Eq4_1}. Let us see how.
For $\kappa\neq 0$, we must distinguish two cases, $\kappa=1/2$ (this is the case of an ideal fluid), and $\kappa\neq 1/2$. 
Define $\bol{\xi}=\bol{w}/w^{2\kappa}$. 
We have $\xi=\abs{\bol{\xi}}=w^{1-2\kappa}$.
If $\kappa=1/2$, we see that $\xi=1$, implying that there is no functional relationship between $\xi$ and $w$. Hence, if one wants to obtain a solution of \eqref{Eq4_1} such that $w=\abs{\alpha}$, $\alpha\in C^{\infty}\lr{\Omega}$, it is sufficient to solve the following system in the domain $\Omega$,
\begin{subequations}\label{xi}
\begin{align}
\BN{\xi}&=\bol{0},\label{xia}\\
\abs{\bol{\xi}}&=1,\label{xib}\\
\nabla\alpha\cdot\bol{\xi}&=0.\label{xic}
\end{align}
\end{subequations}
Once $\bol{\xi}$ is obtained, the desired vector field is $\bol{w}=\alpha\,\bol{\xi}$. Indeed,
\begin{equation}
\BN{w}=\alpha^{2}\,\BN{\xi}+\alpha\,\bol{\xi}\cp\lr{\nabla\alpha\cp\bol{\xi}}=\frac{1}{2}\nabla\alpha^2=\frac{1}{2}\nabla w^2.
\end{equation}
A vector field $\bol{\xi}$ satisfying \eqref{xia} and \eqref{xib} 
can be constructed by application of corollary 1 of \cite{Sato}:
it is sufficient to determine an orthogonal coordinate system $\lr{\ell,\psi,\theta}$ such that $\abs{\nabla\ell}=\abs{\nabla\psi}=1$. Then, $\bol{\xi}=\cos{\theta}~\nabla\psi+\sin{\theta}~\nabla\ell$.
If the function $\alpha$ is not prescribed, solutions of \eqref{xic} can be obtained with the method of characteristics.   
If $\alpha$ is given, equation \eqref{xic} puts an additional constraint on the choice of $\lr{\ell,\psi,\theta}$. From corollary 1 of \cite{Sato} we know that, if \eqref{xic} is satisfied, then $\alpha$ is a function of $\theta$ and $L_{\theta}=\ell\,\cos{\theta}-\psi\,\sin\theta$, i.e. $\alpha=\alpha\lr{\theta,L_{\theta}}$.

When $\kappa\neq \left\{0,1/2\right\}$, solutions of \eqref{Eq4_1} can be obtained by solving the following system in $\Omega$,
\begin{subequations}\label{xi2}
\begin{align}
\BN{\xi}&=\bol{0},\label{xi2a}\\
\abs{\bol{\xi}}&=\abs{\alpha}^{1-2\kappa},\label{xi2b}\\
\nabla\alpha\cdot\bol{\xi}&=0.\label{xi2c}
\end{align}
\end{subequations}
Then, the solution is $\bol{w}=\alpha^{2\kappa}\,\bol{\xi}$.
From corollary 1 of \cite{Sato},
the vector field $\bol{\xi}$ can be constructed 
by determining an orthogonal coordinate system $\lr{\ell,\psi,\theta}$ 
such that $\abs{\nabla\ell}=\abs{\nabla\psi}=\abs{\alpha}^{1-2\kappa}$
and $\nabla\alpha\cdot\lr{\cos\theta\,\nabla\psi+\sin\theta\,\nabla\ell}=0$. 
As in the previous case, $\alpha$ will be a function of $\theta$ and $L_{\theta}$, i.e. $\alpha=\alpha\lr{\theta,L_{\theta}}$.

In the context of fluid dynamics and magnetohydrodynamics,
solenoidal solutions are desirable. 
For this purpose, systems \eqref{xi} and \eqref{xi2} must be modified
by adding the condition 
\begin{equation}
\di\bol{\xi}=0~~~~{\rm in}~~{\Omega}.
\end{equation} 
Since $\bol{w}=\alpha^{2\kappa}\,\bol{\xi}$ and $\nabla\alpha\cdot\bol{\xi}=0$, this is enough to ensure that $\di\bol{w}=0$.

We remark that any solution $\bol{w}$ to equation \eqref{Eq4_1} is
parallel to the Beltrami field $\bol{\xi}=\bol{w}/w^{2\kappa}$. Therefore it 
is endowed with the same integral invariants.
From \cite{Sato}, we know that any nontrivial Beltrami field (i.e. such that $h=\JI{\xi}\neq 0$ in $\Omega$) admits the local representation
$\bol{\xi}=\cos\theta~\nabla\psi+\sin\theta~\nabla\ell$ and two local invariants,
$\theta$ and $L_{\theta}$. Hence, the corresponding solution $\bol{w}$ also preserves $\theta$ and $L_{\theta}$ locally, i.e. $\bol{w}\cdot\nabla\theta=\bol{w}\cdot\nabla L_{\theta}=0$
in the neighborhood where the local representation holds.

\subsection{Examples}

In the remaining part of this section examples
of solutions to equation \eqref{Eq4_1} are given.

\begin{enumerate}

\item Let $\lr{x,y,z}$ be the standard Cartesian coordinate system.
Define $L_{z}=x\,\cos z-y\,\sin z$. Let $\alpha\in C^{\infty}\lr{\Omega}$ be an arbitrary function of $z$ and $L_{z}$, i.e. $\alpha=\alpha\lr{z,L_{z}}$.
The vector field
\begin{equation}
\bol{w}=\alpha\lr{z,L_{z}}\left\{\cos\left[ \sigma\lr{z}\right]~\nabla y+\sin\left[\sigma\lr{ z}\right]~\nabla x\right\},\label{GBxyz}
\end{equation}
is a solution of \eqref{Eq4_1} for $\kappa=1/2$ such that $w=\abs{\alpha}$.
Here, $\sigma\lr{z}$ is an arbitrary $C^{\infty}\lr{\Omega}$ function of $z$. 
Moreover $\nabla\cdot\bol{w}=0$. 
The flow generated by $\bol{w}$ admits the invariants $z$ and $L_{z}$.
A plot of \eqref{GBxyz} is given in figure \ref{fig2}.

\begin{figure}[h]
\hspace*{-0cm}\centering
\includegraphics[scale=0.3]{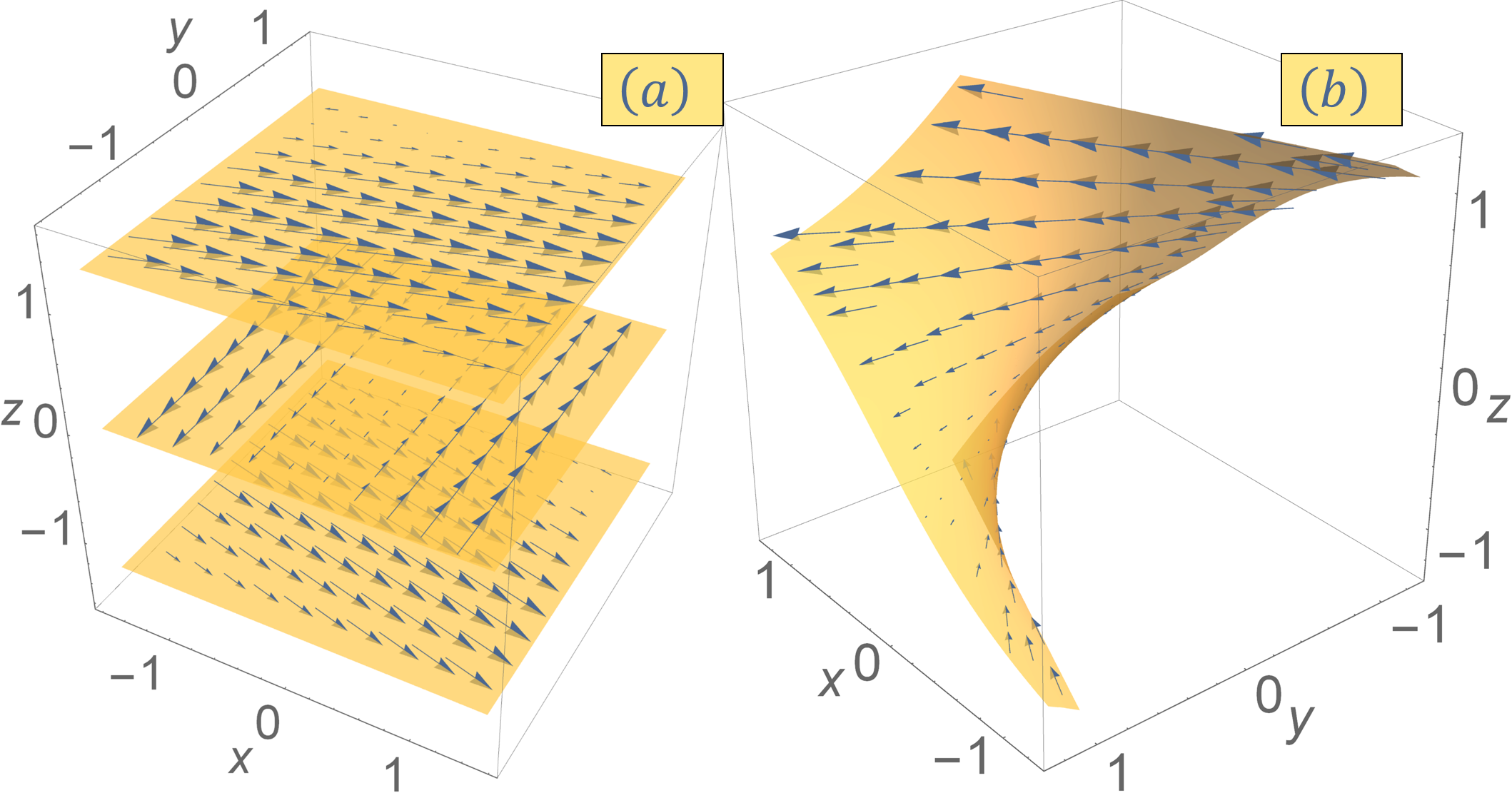}
\caption{\footnotesize (a): Plot of the generalized Beltrami field \eqref{GBxyz} on the surfaces $z=-1.3,0,1.3$. (b): Plot of the generalized Beltrami field \eqref{GBxyz} on the surface $L_{z}=0.5$. Here $\sigma\lr{z}=z$ and $\alpha\lr{z,L_{z}}=\sin\lr{z+L_{z}}$.}
\label{fig2}
\end{figure}

\item Let $\lr{r,\phi,z}$ be the cylindrical coordinate system introduced in section 3. Define $L_{\phi}=r\,\cos \phi-z\,\sin \phi$. Let $\alpha\in C^{\infty}\lr{\Omega}$ be an arbitrary function of $\phi$ and $L_{\phi}$, i.e. $\alpha=\alpha\lr{\phi,L_{\phi}}$.
The vector field
\begin{equation}
\bol{w}=\alpha\lr{\phi,L_{\phi}}\left\{\cos\left[\sigma\lr{ \phi}\right]~\nabla z+\sin\left[\sigma\lr{ \phi}\right]~\nabla r\right\},\label{GBrphiz}
\end{equation}
is a solution to \eqref{Eq4_1} for $\kappa=1/2$ such that $w=\abs{\alpha}$.
Here, $\sigma\lr{\phi}$ is an arbitrary $C^{\infty}\lr{\Omega}$ function of $\phi$. 
Moreover $\nabla\cdot\bol{w}=\alpha\sin\left[\sigma\lr{\phi}\right]/r$. 
The flow generated by $\bol{w}$ admits the invariants $\phi$ and $L_{\phi}$.
A plot of \eqref{GBrphiz} is given in figure \ref{fig3}.

\begin{figure}[h]
\hspace*{-0cm}\centering
\includegraphics[scale=0.3]{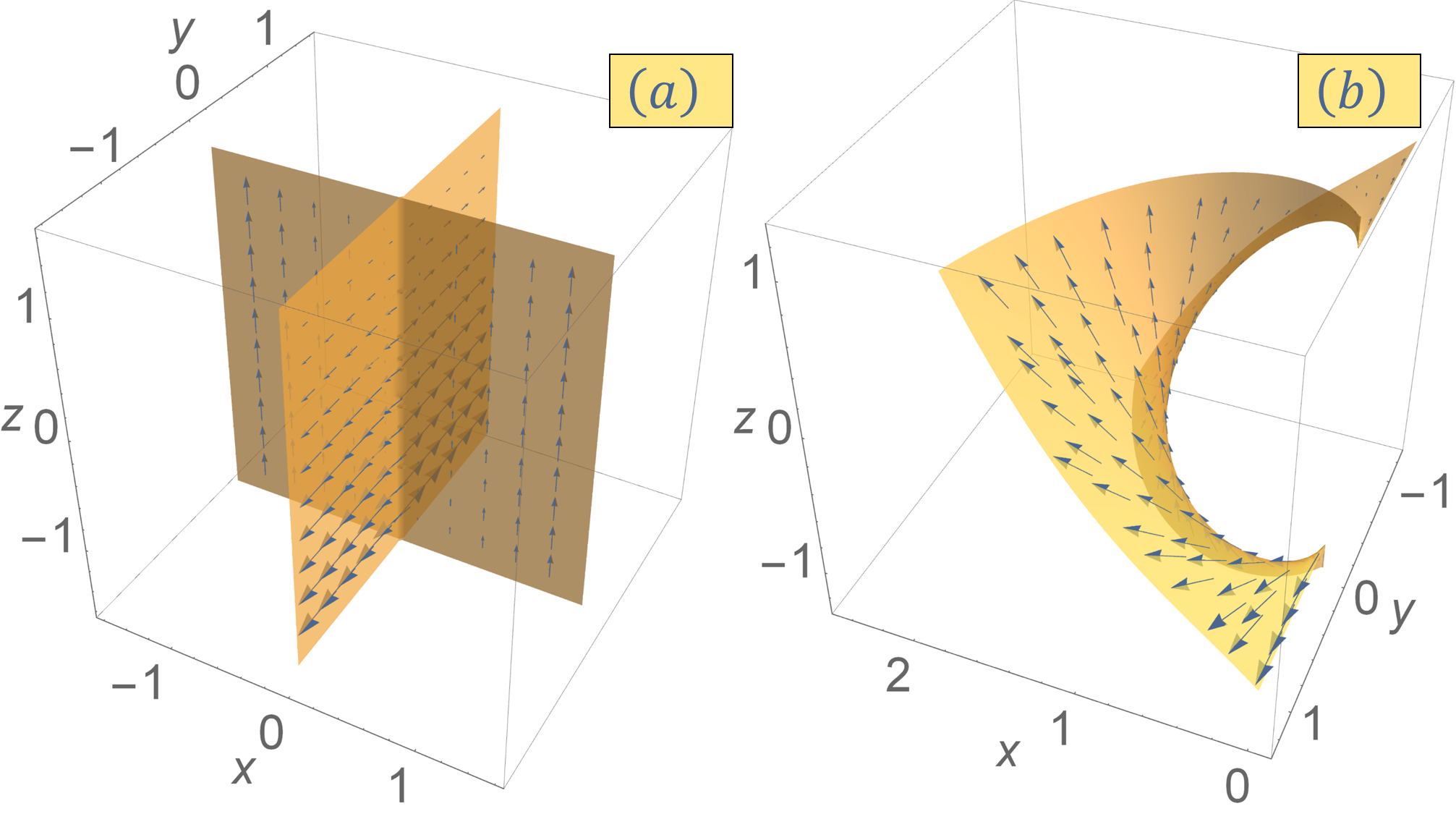}
\caption{\footnotesize (a): Plot of the generalized Beltrami field \eqref{GBrphiz} on the surfaces $\phi=0,\pi$. (b): Plot of the generalized Beltrami field \eqref{GBrphiz} on the surface $L_{\phi}=1$. Here $\sigma\lr{\phi}=\phi$ and $\alpha\lr{\phi,L_{\phi}}=\sin\lr{\phi+L_{\phi}}$.}
\label{fig3}
\end{figure}

\end{enumerate}

\section{Ideal MHD equilibria}

This section is concerned with equation \eqref{Beq} when $\kappa=0$,
\begin{equation}
\BN{w}=\nabla P~~~~{\rm in}~~\Omega.\label{Eq2}
\end{equation}
Local conditions for existence of solutions are examined and
families of analytic solutions are given. 
Solenoidal solutions, i.e. ideal MHD equilibria, are discussed
in the second part of this section.

Observe that any solution of \eqref{Eq2} for $\nabla P\neq\bol{0}$ satisfies $\bol{w}\cdot\nabla P=\cu\bol{w}\cdot\nabla P=0$. Hence $P$ is simultaneously an invariant of the flows generated by $\bol{w}$ and $\cu\bol{w}$. 
The following theorem provides local conditions for the existence of solutions to system \eqref{Eq2}. 

\begin{proposition}\label{prop3}
Let $\bol{w}\in C^{\infty}\lr{\Omega}$ be a smooth vector field and 
$P\in C^{\infty}\lr{\Omega}$ a smooth function in a bounded domain $\Omega\subset\mathbb{R}^3$ such that $\nabla P\neq\bol{0}$ in $\Omega$.
Then, $\bol{w}$ is a solution of equation \eqref{Eq2} if and only if
for every $\bol{x}\in\Omega$ there exist a neighborhood $U\subset\Omega$ of $\bol{x}$ and local functions $\lr{\mu,\lambda,C}\in C^{\infty}\lr{U}$ such that
\begin{subequations}\label{Eq2s}
\begin{align}
\lambda&=\lambda\lr{P,C},\label{Eq2a}\\
0&=\frac{1}{2}\frac{\p\lambda^2}{\p P}\abs{\nabla C}^2+\frac{\p\lambda}{\p P}\nabla\mu\cdot\nabla C-1,\label{Eq2b}\\
0&=\nabla\mu\cdot\nabla P+\lambda\nabla C\cdot\nabla P,\label{Eq2c}
\end{align}
\end{subequations} 
and
\begin{equation}
\bol{w}=\nabla\mu+\lambda\,\nabla C~~~~{\rm in}~~U.\label{wthm2}
\end{equation}
\end{proposition}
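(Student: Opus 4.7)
The plan is to prove the two implications separately. The reverse direction, going from \eqref{Eq2s} and \eqref{wthm2} to \eqref{Eq2}, is a direct vector calculation: with $\bol{w}=\nabla\mu+\lambda\nabla C$ one has $\cu\bol{w}=\nabla\lambda\cp\nabla C$, and the BAC--CAB identity gives
\begin{equation*}
\bol{w}\cp\cu\bol{w}=(\nabla\mu\cdot\nabla C+\lambda|\nabla C|^{2})\nabla\lambda-(\nabla\mu\cdot\nabla\lambda+\lambda\,\nabla C\cdot\nabla\lambda)\nabla C.
\end{equation*}
Expanding $\nabla\lambda=\lambda_{P}\nabla P+\lambda_{C}\nabla C$ via (a), the coefficient of $\nabla P$ collapses to $1$ by (b), while the coefficient of $\nabla C$ vanishes by (c) (after using $\lambda_{C}$ cancellations), leaving $\bol{w}\cp\cu\bol{w}=\nabla P$.

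For the forward direction, I would first rephrase the equation in the language of differential forms. Let $w$ denote the 1-form dual to $\bol{w}$, so that $dw$ is the 2-form dual to $\cu\bol{w}$. The vector identity $(\bol{w}\cdot\nabla)\bol{w}=\tfrac12\nabla w^{2}-\bol{w}\cp\cu\bol{w}$ translates \eqref{Eq2} into $i_{\bol{w}}dw=-dP$. Contracting once more with $\bol{w}$ and with $\cu\bol{w}$ yields $\bol{w}\cdot\nabla P=\cu\bol{w}\cdot\nabla P=0$, so $P$ is a common first integral. Since $dP\neq 0$, the closed $2$-form $dw$ is nonzero, and since $dw\wedge dw$ is a $4$-form in $\mathbb{R}^{3}$ it vanishes automatically, so $dw$ has constant rank two in a neighborhood of any prescribed point.

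Next I would invoke the Lie--Darboux theorem exactly as in the Beltrami analysis of \cite{Sato}: on a contractible neighborhood $U\subset\Omega$ there exist smooth functions $\lambda,C$ with $dw=d\lambda\wedge dC$, whence $w-\lambda\,dC$ is closed and Poincar\'e's lemma supplies $\mu\in C^{\infty}(U)$ with $w=d\mu+\lambda\,dC$, i.e.\ the representation \eqref{wthm2}. Substituting this back into $i_{\bol{w}}dw=-dP$ gives
\begin{equation*}
-dP=(\bol{w}\cdot\nabla\lambda)\,dC-(\bol{w}\cdot\nabla C)\,d\lambda,
\end{equation*}
so $dP\in\mathrm{span}(d\lambda,dC)$. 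Therefore $P$ is a smooth function of $\lambda$ and $C$ on $U$, and after shrinking $U$ and exchanging the roles of $\lambda$ and $C$ if necessary (so that the $d\lambda$ coefficient $\bol{w}\cdot\nabla C$ is nonzero), this relation can be inverted to yield (a), $\lambda=\lambda(P,C)$. Finally, plugging $\nabla\lambda=\lambda_{P}\nabla P+\lambda_{C}\nabla C$ into the algebraic identity for $\bol{w}\cp\cu\bol{w}$ above and equating with $\nabla P$, projection on the independent directions $\nabla P$ and $\nabla C$ produces (b) from the $\nabla P$ component, and (c) from the $\nabla C$ component after cancelling $\lambda_{C}$ terms and dividing by $\lambda_{P}\neq 0$.

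The main obstacle will be the reparametrization step that forces $\lambda$ to depend specifically on $(P,C)$. The Lie--Darboux normal form is flexible---any pair of independent functions whose wedge of differentials reproduces $dw$ is admissible---but one must track the loci where $\nabla P\wedge\nabla C$ or $\bol{w}\cdot\nabla C$ degenerates. These are handled by further restricting $U$ and, if necessary, swapping $\lambda\leftrightarrow C$ in the Darboux pair. Once this is settled, the derivation of (b) and (c) is purely algebraic, and the reverse direction is a routine check.
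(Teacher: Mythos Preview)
Your proposal is correct and follows essentially the same route as the paper: apply Lie--Darboux to write $\bol{w}=\nabla\mu+\lambda\,\nabla C$, use $\cu\bol{w}\cdot\nabla P=0$ to get $P=P(\lambda,C)$, reparametrize the Darboux pair so that $P_{\lambda}\neq 0$ (the paper gives the explicit swap $\mu'=\mu+\lambda C$, $\lambda'=-C$, $C'=\lambda$, which is what your ``exchange of roles'' amounts to), invert to $\lambda=\lambda(P,C)$, and then read off \eqref{Eq2b}--\eqref{Eq2c} by matching the $\nabla P$ and $\nabla C$ components of $\bol{w}\cp\cu\bol{w}$. The only point you leave implicit that the paper states is why $\nabla P$ and $\nabla C$ are independent: this follows from $\cu\bol{w}=\lambda_{P}\,\nabla P\cp\nabla C\neq\bol{0}$, which you already observed in noting $dw\neq 0$.
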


\begin{proof}
First, we prove that system \eqref{Eq2} implies system \eqref{Eq2s} and equation \eqref{wthm2}. Observe that $\nabla P\neq\bol{0}$ can be satisfied only if both $\bol{w}\neq\bol{0}$ and $\cu\bol{w}\neq\bol{0}$ in $\Omega$. 
The latter condition guarantees that the hypothesis of the Lie-Darboux theorem are verified (see theorem 1 of \cite{Sato} and references \cite{DeLeon89,Arnold89,Salmon17}). Hence, for every $\bol{x}\in\Omega$ 
there exist a neighborhood $U\subset\Omega$ of $\bol{x}$ and local functions $\lr{\mu,\lambda,C}\in C^{\infty}\lr{U}$ such that
\begin{equation}
\bol{w}=\nabla\mu+\lambda\,\nabla C~~~~{\rm in}~~U.
\end{equation}
Furthermore, from equation \eqref{Eq2} we have
\begin{equation}
\nabla P\cdot\cu\bol{w}=\nabla P\cdot\nabla\lambda\cp\nabla C=0~~~~{\rm in}~~U.
\end{equation}
Hence, $P=P\lr{\lambda,C}$ in $U$. Since $\nabla P\neq\bol{0}$, $P_{\lambda}$ and $P_{C}$ cannot vanish simultaneously.  
Suppose that $P_{\lambda}=0$ and $P_{C}\neq 0$. Then, the reparametrization
$\mu^{'}=\mu+\lambda\, C$, $\lambda^{'}=-C$, and $C^{'}=\lambda$ gives $P_{\lambda^{'}}=-P_{C}\neq 0$. Therefore, we can assume without loss of generality that $P_{\lambda}\neq 0$ in $U$. By the inverse function theorem, the function $P$ is locally invertible to $\lambda=\lambda\lr{P,C}$. Again, we can assume $U$ to be the domain of such inverse function.  

Next, we evaluate $\BN{w}$. After some manipulations
\begin{equation}
\begin{split}
\bol{w}\cp &\lr{\cu\bol{w}}=\lr{\nabla\mu\cdot\nabla C+\lambda\,\abs{\nabla C}^2}\lambda_{P}\,\nabla P\\&-\left(\nabla\mu\cdot\nabla P+\lambda\nabla C\cdot\nabla P\right)\lambda_{P}\,\nabla C.\label{wcurlw}
\end{split}
\end{equation}
On the other hand $\BN{w}=\nabla P$. Since $\cu\bol{w}\neq\bol{0}$, we must also have $\nabla P\cp\nabla C\neq\bol{0}$, meaning that the directions $\nabla P$ and $\nabla C$ are independent. It follows that the factor multiplying on $\nabla P$ in equation \eqref{wcurlw} must equal $1$, while the factor on $\nabla C$ must equal $0$.
We have thus obtained system \eqref{Eq2s} and the first implication is proven.
The proof of the converse statement is trivial.
\end{proof}

Observe that the local condition for $\bol{w}$ to be solenoidal is
\begin{equation}
\di\bol{w}=\Delta\mu+\lambda_{P}\nabla P\cdot\nabla C+\lambda_{C}\abs{\nabla C}^2+\lambda\,\Delta C=0~~~~{\rm in}~~U.
\end{equation}
Proposition \ref{prop3} suggests the following method to solve equation \eqref{Eq2}. Suppose that, for a given $P$, one can
construct an orthogonal set $\lr{\mu,P,C}$. Then, equation 
\eqref{Eq2c} is automatically satisfied, and the desired solution can be obtained
if one can find a function $\lambda$ with the property, arising from equation \eqref{Eq2b}, $\frac{\p\lambda^2}{\p P}=2/{\abs{\nabla C}^2}$.  
On this regard, we have the following.

\begin{proposition}\label{Eq2_2}
Let $\lr{\mu,P,C}\in C^{\infty}\lr{\Omega}$ be an orthogonal coordinate system in a bounded domain $\Omega$ such that 
\begin{equation}
\frac{\p}{\p\mu}\abs{\nabla C}^2=0.
\end{equation}
Then, the vector field
\begin{equation}
\bol{w}=\nabla \sigma\lr{\mu}+\sqrt{2\int{\frac{dP}{\abs{\nabla C}^2}}}~\nabla C,
\end{equation}
is a solution of \eqref{Eq2}. 
Here, $\sigma\lr{\mu}$ is an arbitrary $C^{\infty}\lr{\Omega}$ function of $\mu$.
\end{proposition}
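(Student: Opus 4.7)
The plan is to apply Proposition \ref{prop3} directly: I will identify the candidate $\bol{w}$ as being already in the Clebsch-type form $\nabla\mu'+\lambda\,\nabla C$ demanded by \eqref{wthm2}, and then verify the three algebraic conditions \eqref{Eq2a}--\eqref{Eq2c}. Concretely, I set $\mu'=\sigma(\mu)$ and
\[
\lambda(P,C)=\sqrt{2\int\frac{dP}{\abs{\nabla C}^{2}}},
\]
where the integral is to be read as an antiderivative in the variable $P$ with $C$ treated as a parameter. The whole role of the hypothesis $\p_{\mu}\abs{\nabla C}^{2}=0$ is to make this reading legitimate: since $(\mu,P,C)$ are coordinates and the integrand is independent of $\mu$, the quantity $\abs{\nabla C}^{2}$ is a function of $(P,C)$ alone, so $\lambda$ is a well-defined smooth function of $(P,C)$, giving \eqref{Eq2a} for free.

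Next I would dispatch \eqref{Eq2c}: because $(\mu,P,C)$ is an orthogonal system, $\nabla\mu\cdot\nabla P=0$ and $\nabla C\cdot\nabla P=0$, so both terms on the right of \eqref{Eq2c} vanish identically. The same orthogonality kills the middle term of \eqref{Eq2b}, since $\nabla\sigma(\mu)\cdot\nabla C=\sigma'(\mu)\,\nabla\mu\cdot\nabla C=0$.

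The only remaining content of \eqref{Eq2b} is the identity
\[
\tfrac{1}{2}\,\frac{\p\lambda^{2}}{\p P}\,\abs{\nabla C}^{2}=1,
\]
which follows at once from the defining formula for $\lambda$: differentiating $\lambda^{2}=2\int dP/\abs{\nabla C}^{2}$ with respect to $P$ (at fixed $C$) gives $\p_{P}\lambda^{2}=2/\abs{\nabla C}^{2}$, and the factor of $\abs{\nabla C}^{2}$ cancels. With \eqref{Eq2a}--\eqref{Eq2c} and the representation \eqref{wthm2} in hand, Proposition \ref{prop3} delivers $\BN{w}=\nabla P$, completing the proof.

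The step that carries the real content is not the algebra of \eqref{Eq2b} but rather justifying that $\lambda$ is a legitimate function of $(P,C)$ and nothing else, so that the partial derivatives in Proposition \ref{prop3} are unambiguous. This is exactly where the hypothesis $\p_{\mu}\abs{\nabla C}^{2}=0$ enters in an essential way; without it, the antiderivative $\int dP/\abs{\nabla C}^{2}$ would in general depend on $\mu$ and the identification $\lambda=\lambda(P,C)$ required by \eqref{Eq2a} would fail. I expect this to be the only subtle point in an otherwise purely computational verification.
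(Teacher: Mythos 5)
Your proof is correct and follows essentially the paper's own route: the paper introduces this proposition precisely as the specialization of Proposition \ref{prop3} to an orthogonal system $\lr{\mu,P,C}$, with $\lambda$ chosen so that $\p_P\lambda^2=2/\abs{\nabla C}^2$, which is exactly the reduction you carry out, and the hypothesis $\p_\mu\abs{\nabla C}^2=0$ plays the same role in both arguments (making $\lambda=\lambda\lr{P,C}$ well defined). The only cosmetic difference is that the paper verifies the conclusion by directly expanding $\bol{w}\cp\lr{\cu\bol{w}}$ using orthogonality, whereas you formally invoke the (trivial) converse direction of Proposition \ref{prop3}; the underlying computation is the same.
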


Proposition \ref{Eq2_2} can be verified by  
using the orthogonality of the coordinate system $\lr{\mu,P,C}$,
\begin{equation}
\begin{split}
\bol{w}\cp \lr{\cu\bol{w}}=&\sigma_{\mu}\nabla\mu\cp\lr{\nabla\sqrt{2\int{\frac{dP}{\abs{\nabla C}^2}}}\cp\nabla C}\\
&+\nabla C\cp\lr{\nabla\int{\frac{dP}{\abs{\nabla C}^2}}\cp\nabla C}\\=&\nabla P.
\end{split}
\end{equation}
Hence, for a given $P$, a solution $\bol{w}=\nabla\sigma\lr{\mu}+\lambda\,\nabla C$ of \eqref{Eq2} can be obtained if an orthogonal coordinate system $\lr{\mu,P,C}$ can be found such that $\p_{\mu}\abs{\nabla C}^2=0$. Then, $\lambda=\sqrt{2\int{\frac{dP}{\abs{\nabla C}^2}}}$. Below we give some examples of application of proposition \ref{Eq2_2}.

\subsection{Examples}

\begin{enumerate}

\item Let $\lr{x,y,z}$ be the standard Cartesian coordinate system.
Let $P\in C^{\infty}\lr{\Omega}$ be a function of the spatial coordinate $x$, $P=P\lr{x}$, and $\sigma\in C^{\infty}\lr{\Omega}$ an arbitrary function of the coordinate $z$, $\sigma=\sigma\lr{z}$. Then, the vector field
\begin{equation}
\bol{w}=\nabla \sigma\lr{z}+\sqrt{2 P}~\nabla y,
\end{equation}   
is a solution of equation \eqref{Eq2}. Furthermore $\di\bol{w}=\sigma_{zz}$, which vanishes for $\sigma_{zz}=0$. 
In a similar manner, one can construct solutions for given profiles
$P=P\lr{y}$ and $P=P\lr{z}$. Next, consider the case $P=P\lr{x,y}$. The first order partial differential equation for the variable $C=C\lr{x,y}$,
\begin{equation}
\nabla P\cdot\nabla C=0,
\end{equation}
can be solved, at least locally, by application of the method of characteristics.
Note that, by construction, $\p_{z}\abs{\nabla C}=0$. Hence, the hypothesis of proposition \ref{Eq2_2} are satisfied.
Thus, the vector field
\begin{equation}
\bol{w}=\nabla \sigma\lr{z}+\sqrt{2\int{\frac{dP}{\abs{\nabla C}^2}}}~\nabla C,
\end{equation}
is a local solution of equation \eqref{Eq2}.
For example, assume $P=xy$. We find $C=\frac{y^2-x^2}{2}$ and
\begin{equation}
\bol{w}=\nabla \sigma\lr{z}+\sqrt{\log\lr{\sqrt{C^2+P^2}+P}}~\nabla C.
\end{equation}
Let $P=e^{x}\sin{y}$. We have $C=e^{x}\cos{y}$ and
\begin{equation}
\bol{w}=\nabla \sigma\lr{z}+\sqrt{\frac{2}{C}\arctan\lr{\frac{P}{C}}}~\nabla C.
\end{equation} 
The cases $P=P\lr{x,z}$ and $P=P\lr{y,z}$ can be treated 
with the same approach. 

\item Consider cylindrical coordinates $\lr{r,\phi,z}$. Let $P\in C^{\infty}\lr{\Omega}$ be a function of the radial coordinate $r$, $P=P\lr{r}$. Let $\sigma\in C^{\infty}\lr{\Omega}$ be an arbitrary function of the angle variable $\phi$, $\sigma=\sigma\lr{\phi}$. Then, the vector field
\begin{equation}
\bol{w}=\nabla \sigma\lr{\phi}+\sqrt{2P}~\nabla z,
\end{equation}
is a solution of equation \eqref{Eq2}. Furthermore $\di\bol{w}=\sigma_{\phi\phi}/r^2$, which vanishes for $\sigma_{\phi\phi}=0$. In a similar manner, one can construct solutions for given profiles $P=P\lr{\phi}$ and $P=P\lr{z}$.
Next, consider the case $P=P\lr{r,z}$. The first order partial differential equation for the variable $C=C\lr{r,z}$, 
\begin{equation}
\nabla P\cdot\nabla C=0,
\end{equation}
can be solved, at least locally, by application of the method of characteristics. 
Note that, by construction, $\p_{\phi}\abs{\nabla C}=0$. Hence, the hypothesis of proposition \ref{Eq2_2} are satisfied.
Thus, the vector field
\begin{equation}
\bol{w}=\nabla \sigma\lr{\phi}+\sqrt{2\int{\frac{dP}{\abs{\nabla C}^2}}}~\nabla C,
\end{equation}
is a local solution of equation \eqref{Eq2}. For example, assume
$P=\exp\lr{-r-z}$. We find $C=r-z$ and
\begin{equation}
\bol{w}=\nabla \sigma\lr{\phi}+\sqrt{\,\exp\lr{-r-z}}~\nabla \lr{r-z}.
\end{equation} 
The case $P=P\lr{r,\phi}$ can be treated in a similar way.
\end{enumerate}

\subsection{Local representation of ideal MHD equilibria}

Ideal MHD equilibria correspond to solenoidal solutions of \eqref{Beq}
for $\kappa=0$: 
\begin{equation}\label{IMHD}
\BN{w}=\nabla P,~~~~\di\bol{w}=0~~~~{\rm in} ~~~~{\Omega}.
\end{equation}
The divergence free requirement poses a stringent condition on the set of solutions. When a solution is assumed to possess Euclidean symmetries, 
i.e. symmetry under translation, rotation, or their combination, 
system \eqref{IMHD} reduces to a single elliptic partial differential equation,
the Grad-Shafranov equation \cite{Grad58,Edenstrasser80_1,Edenstrasser80_2}. 
Here, we show that general ideal MHD equilibria are locally described
by a system of two coupled second order partial differential equations,  
and derive local conditions for existence of solutions.

\begin{proposition}\label{prop5}
Let $\bol{w}\in C^{\infty}\lr{\Omega}$ be a smooth vector field in a 
bounded domain $\Omega\subset\mathbb{R}^3$. Assume $\bol{w}\neq\bol{0}$ in $\Omega$. 
Then, for every $\bol{x}\in\Omega$ there exist a neighborhood $U\subset\Omega$ of $\bol{x}$ and local smooth functions $\lr{\chi,\pi}\in C^{\infty}\lr{U}$ such that 
\begin{equation}
\bol{w}=\nabla \chi\cp\nabla\pi~~~~{\rm in}~~U.
\end{equation}
Furthermore, system \eqref{IMHD} is locally equivalent to
\begin{subequations}\label{IMHD2}
\begin{align}
\di\left[{\nabla\pi\cp\lr{\nabla \chi\cp\nabla\pi}}\right]&=P_{\chi},\\
\di\left[{\nabla \chi\cp\lr{\nabla\pi\cp\nabla \chi}}\right]&=P_{\pi}~~~~{\rm in}~~U.
\end{align}
\end{subequations}
\end{proposition}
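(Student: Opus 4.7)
The plan is to decouple the two assertions of the proposition: first produce the local Euler/Clebsch representation $\bol{w}=\nabla\chi\cp\nabla\pi$ (which automatically encodes the divergence-free requirement of \eqref{IMHD}), and then expand $\BN{w}$ in that representation to derive the PDE pair. Both steps rely only on differential-form manipulations and standard vector-calculus identities, with the Lie--Darboux theorem doing the conceptual work.

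For the representation I would associate to $\bol{w}$ the 2-form $\omega=i_{\bol{w}}\lr{dx\w dy\w dz}$. A direct calculation gives $d\omega=\lr{\di\bol{w}}\,dx\w dy\w dz$, so the divergence-free assumption implicit in \eqref{IMHD} makes $\omega$ closed; since $\bol{w}\neq\bol{0}$, $\omega$ is also nowhere vanishing and of constant rank two. The Darboux normal-form theorem for closed 2-forms (the same tool invoked in Proposition \ref{prop3} and in the references \cite{DeLeon89,Arnold89,Salmon17}) then supplies, around each $\bol{x}\in\Omega$, a local chart $\lr{\chi,\pi,s}\in C^{\infty}\lr{U}$ in which $\omega=d\chi\w d\pi$. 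Translating back through $i_{\bol{w}}\lr{dx\w dy\w dz}=\omega$ yields $\bol{w}=\nabla\chi\cp\nabla\pi$ on $U$; note that this is the Poincar\'e dual of the Clebsch form used in Proposition \ref{prop3}, and that the third Darboux coordinate $s$ spans the kernel of $\omega$, so $\bol{w}$ is proportional to $\p_{s}$.

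To extract \eqref{IMHD2}, I substitute this representation into $\BN{w}=\nabla P$ and apply the BAC--CAB identity:
\begin{equation*}
\BN{w}=\lr{\nabla\chi\cp\nabla\pi}\cp\lr{\cu\bol{w}}=\nabla\pi\lr{\nabla\chi\cdot\cu\bol{w}}-\nabla\chi\lr{\nabla\pi\cdot\cu\bol{w}}.
\end{equation*}
Dotting the momentum equation with $\bol{w}$ gives $\bol{w}\cdot\nabla P=0$; combined with $\bol{w}\cdot\nabla\chi=\bol{w}\cdot\nabla\pi=0$ and the fact that $\lr{\chi,\pi,s}$ is a genuine coordinate system on $U$ with $\bol{w}\parallel\p_{s}$, this forces $P$ to be independent of $s$, i.e.\ $P=P\lr{\chi,\pi}$ locally. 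Hence $\nabla P=P_{\chi}\nabla\chi+P_{\pi}\nabla\pi$, and the linear independence of $\nabla\chi,\nabla\pi$ (a consequence of $\bol{w}\neq\bol{0}$) lets me match coefficients to obtain $P_{\chi}=-\nabla\pi\cdot\cu\bol{w}$ and $P_{\pi}=\nabla\chi\cdot\cu\bol{w}$.

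The final step converts these scalar products into the divergences appearing in \eqref{IMHD2}. Using the identity $\di\lr{\bol{a}\cp\bol{b}}=\bol{b}\cdot\cu\bol{a}-\bol{a}\cdot\cu\bol{b}$ together with $\cu\nabla\chi=\cu\nabla\pi=\bol{0}$, I rewrite
\begin{equation*}
\nabla\chi\cdot\cu\bol{w}=-\di\lr{\nabla\chi\cp\bol{w}}=\di\left[\nabla\chi\cp\lr{\nabla\pi\cp\nabla\chi}\right],
\end{equation*}
and analogously $-\nabla\pi\cdot\cu\bol{w}=\di\left[\nabla\pi\cp\lr{\nabla\chi\cp\nabla\pi}\right]$, reproducing \eqref{IMHD2}. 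The converse direction just runs the same chain in reverse: any $\bol{w}=\nabla\chi\cp\nabla\pi$ is automatically solenoidal, and \eqref{IMHD2}, together with the interpretation $P=P\lr{\chi,\pi}$, recovers $\BN{w}=\nabla P$. The only mildly delicate point is the passage $P=P\lr{\chi,\pi}$, which requires the Darboux normal form to constitute a bona fide local coordinate chart; this is precisely what the theorem guarantees, and the rest is purely algebraic.
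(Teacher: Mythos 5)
Your proposal is correct and follows essentially the same route as the paper: the Lie--Darboux normal form for the closed 2-form $i_{\bol{w}}\lr{dx\w dy\w dz}$ gives $\bol{w}=\nabla\chi\cp\nabla\pi$, the condition $\bol{w}\cdot\nabla P=0$ gives $P=P\lr{\chi,\pi}$, and matching the coefficients of $\nabla\chi$ and $\nabla\pi$ in $\BN{w}=\nabla P$ yields \eqref{IMHD2}. The only difference is cosmetic but welcome: you reach the divergence form via the triple-product identity and $\di\lr{\bol{a}\cp\bol{b}}=\bol{b}\cdot\cu\bol{a}-\bol{a}\cdot\cu\bol{b}$, whereas the paper performs a lengthy direct expansion of $\bol{w}\cp\lr{\cu\bol{w}}$ and regroups it into divergences.
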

\begin{proof}
Since $\bol{w}\neq\bol{0}$ and $\di\bol{w}=0$, the Lie-Darboux theorem of differential geometry guarantees that there exists local smooth functions $\lr{\chi,\pi}\in C^{\infty}\lr{U}$ such that $\bol{w}=\nabla\chi\cp\nabla\pi$. From $\bol{w}\cdot\nabla P=0$, we have $P=P\lr{\chi,\pi}$.
Then, in $U$, 
\begin{equation}\label{p5_1}
\begin{split}
\bol{w}&\cp\lr{\cu\bol{w}}=\\
&=\bol{w}\cp\left[{\Delta\pi\,\nabla\chi-\Delta\chi\,\nabla\pi+\lr{\nabla\pi\cdot\nabla}\nabla\chi-\lr{\nabla\chi\cdot\nabla}\nabla\pi}\right]
\\&=\Delta\pi\,\nabla\chi\cp\lr{\nabla\pi\cp\nabla\chi}
+\Delta\chi\,\nabla\pi\cp\lr{\nabla\chi\cp\nabla\pi}
\\&~~~~+\left[\lr{\nabla\pi\cdot\nabla}\nabla\chi\right]\cp\lr{\nabla\pi\cp\nabla\chi}
+\left[\lr{\nabla\chi\cdot\nabla}\nabla\pi\right]\cp\lr{\nabla\chi\cp\nabla\pi}
\\&=\left(-\Delta\pi\,\nabla\chi\cdot\nabla\pi+\Delta\chi\,\abs{\nabla\pi}^2+\nabla\pi\cdot\lr{\nabla\chi\cdot\nabla}\nabla\pi\right)\nabla\chi
\\&~~~~-\lr{\nabla\pi\cdot\lr{\nabla\pi\cdot\nabla}\nabla\chi}\nabla\chi
\\&~~~~+\lr{-\Delta\chi\,\nabla\pi\cdot\nabla\chi+\Delta\pi\,\abs{\nabla\chi}^2+\nabla\chi\cdot\lr{\nabla\pi\cdot\nabla}\nabla\chi}\nabla\pi
\\&~~~~-\left[{\nabla\chi\cdot\lr{\nabla\chi\cdot\nabla}\nabla\pi}\right]\nabla\pi
\\&=\left\{{\di\left[{\abs{\nabla\pi}^2\nabla\chi-\lr{\nabla\pi\cdot\nabla\chi}\nabla\pi}\right]}+
\lr{\nabla\pi\cdot\nabla}\lr{\nabla\pi\cdot\nabla\chi}\right\}\nabla\chi
\\&~~~~+\left\{-\lr{\nabla\chi\cdot\nabla}\abs{\nabla\pi}^2+\nabla\pi\cdot\left[{\lr{\nabla\chi\cdot\nabla}\nabla\pi-\lr{\nabla\pi\cdot\nabla}\nabla\chi}\right]\right\}\nabla\chi
\\&~~~~+\left\{{\di\left[{\abs{\nabla\chi}^2\nabla\pi-\lr{\nabla\chi\cdot\nabla\pi}\nabla\chi}\right]}+
\lr{\nabla\chi\cdot\nabla}\lr{\nabla\chi\cdot\nabla\pi}\right\}\nabla\pi
\\&~~~~+\left\{-\lr{\nabla\pi\cdot\nabla}\abs{\nabla\chi}^2+\nabla\chi\cdot\left[{\lr{\nabla\pi\cdot\nabla}\nabla\chi-\lr{\nabla\chi\cdot\nabla}\nabla\pi}\right]\right\}\nabla\pi.
\\&=\left\{\di\left[\nabla\pi\cp\lr{\nabla\chi\cp\nabla\pi}\right]\right\}\nabla\chi
+\left\{\di\left[\nabla\chi\cp\lr{\nabla\pi\cp\nabla\chi}\right]\right\}\nabla\pi.
\end{split}
\end{equation}
On the other hand,
\begin{equation}\label{p5_2}
\BN{w}=P_{\chi}\,\nabla\chi+P_{\pi}\,\nabla\pi~~~~{\rm in}~~U.
\end{equation}
Combining \eqref{p5_1} and \eqref{p5_2}, one obtains \eqref{IMHD2}.
\end{proof}

The variable $\chi$ above can be identified with $P$:

\begin{proposition}\label{prop6}
Let $\bol{w}\in C^{\infty}\lr{\Omega}$ be a smooth vector field in a 
bounded domain $\Omega\subset\mathbb{R}^3$. Assume $\bol{w}\neq\bol{0}$ and $\nabla P\neq\bol{0}$ in $\Omega$. 
Then, for every $\bol{x}\in\Omega$ there exist a neighborhood $U\subset\Omega$ of $\bol{x}$ and a local smooth function $\zeta\in C^{\infty}\lr{U}$ such that 
\begin{equation}
\bol{w}=\nabla P\cp\nabla\zeta~~~~{\rm in}~~U.
\end{equation}
Furthermore, system \eqref{IMHD} is locally equivalent to
\begin{subequations}\label{IMHD3}
\begin{align}
\di\left[{\nabla\zeta\cp\lr{\nabla P\cp\nabla\zeta}}\right]&=1,\\
\di\left[{\nabla P\cp\lr{\nabla\zeta\cp\nabla P}}\right]&=0~~~~{\rm in}~~U.
\end{align}
\end{subequations}
\end{proposition}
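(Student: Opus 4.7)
The plan is to bootstrap Proposition \ref{prop5} by rewriting its Clebsch representation so that the first scalar potential becomes the pressure $P$ itself. First I would apply Proposition \ref{prop5} on a neighborhood $U$ of any $\bol{x}\in\Omega$ to obtain smooth $\chi,\pi\in C^{\infty}\lr{U}$ with $\bol{w}=\nabla\chi\cp\nabla\pi$. The ideal MHD equation \eqref{IMHD} gives $\bol{w}\cdot\nabla P=0$, hence $\nabla P\cdot\lr{\nabla\chi\cp\nabla\pi}=0$, and this linear dependence of the three gradients forces $P=P\lr{\chi,\pi}$ locally in $U$.

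Next I would exploit $\nabla P\neq\bol{0}$: after swapping $\chi\leftrightarrow\pi$ if necessary (a harmless sign change in $\bol{w}$), one can assume $P_{\chi}\neq 0$, and the inverse function theorem yields $\chi=\chi\lr{P,\pi}$ on a possibly smaller $U$. This recasts
\begin{equation*}
\bol{w}=\nabla\chi\cp\nabla\pi=\chi_{P}\,\nabla P\cp\nabla\pi.
\end{equation*}
To construct $\zeta$, I would complete $\lr{P,\pi}$ to a local chart $\lr{P,\pi,\nu}$ on $U$, which is possible because $\nabla P$ and $\nabla\pi$ are linearly independent wherever $\bol{w}\neq\bol{0}$, and define
\begin{equation*}
\zeta\lr{P,\pi,\nu}:=\int^{\pi}\chi_{P}\lr{P,\pi'}\,d\pi'.
\end{equation*}
Then $\zeta_{\pi}=\chi_{P}$ and $\zeta_{\nu}=0$, so $\nabla\zeta-\zeta_{P}\,\nabla P=\chi_{P}\,\nabla\pi$, whence $\nabla P\cp\nabla\zeta=\chi_{P}\,\nabla P\cp\nabla\pi=\bol{w}$ as required.

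For the reduction of \eqref{IMHD} to \eqref{IMHD3}, I would reuse the algebraic identity established inside the proof of Proposition \ref{prop5},
\begin{equation*}
\bol{w}\cp\lr{\cu\bol{w}}=\left\{\di\left[\nabla\pi\cp\lr{\nabla\chi\cp\nabla\pi}\right]\right\}\nabla\chi+\left\{\di\left[\nabla\chi\cp\lr{\nabla\pi\cp\nabla\chi}\right]\right\}\nabla\pi,
\end{equation*}
which holds purely vectorially whenever $\bol{w}=\nabla\chi\cp\nabla\pi$. Specializing to $\chi\mapsto P$, $\pi\mapsto\zeta$ and matching coefficients of the linearly independent gradients $\nabla P$ and $\nabla\zeta$ in $\BN{w}=\nabla P$ produces exactly \eqref{IMHD3}; the solenoidality $\di\bol{w}=0$ is automatic for fields of the form $\nabla P\cp\nabla\zeta$, and the converse implication is immediate.

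I expect the only subtle step to be the production of $\zeta$ itself, since finding a scalar whose cross product $\nabla P\cp\nabla\zeta$ reproduces a prescribed $\bol{w}$ normally amounts to solving a compatibility PDE. The key observation is that passing to the adapted chart $\lr{P,\pi,\nu}$ collapses this compatibility into a one-dimensional antiderivative in $\pi$, after which no obstruction remains.
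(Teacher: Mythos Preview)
Your proposal is correct and follows essentially the same route as the paper: invoke Proposition~\ref{prop5} to get $\bol{w}=\nabla\chi\cp\nabla\pi$, use $\bol{w}\cdot\nabla P=0$ to write $P=P\lr{\chi,\pi}$, invert via $\nabla P\neq\bol{0}$ to obtain $\chi=\chi\lr{P,\pi}$, and set $\zeta=\int\chi_{P}\,d\pi$, then recycle the vector identity from Proposition~\ref{prop5} for \eqref{IMHD3}. Your construction of $\zeta$ through an auxiliary chart $\lr{P,\pi,\nu}$ is slightly more explicit than the paper's bare antiderivative, but the argument is otherwise identical.
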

\begin{proof}
We need to show that locally $\bol{w}=\nabla P\cp\nabla\zeta$.
From proposition \ref{prop5} we have $\bol{w}=\nabla\chi\cp\nabla\pi$ in some neighborhood $U$.  From $\nabla P\cdot\bol{w}=0$, it follows that $P=P\lr{\chi,\pi}$ in $U$. 
Furthermore, since $\nabla P\neq\bol{0}$, from the inverse function theorem the function $P$ can be inverted locally to obtain either $\pi=\pi\lr{P,\chi}$ or $\chi=\chi\lr{P,\pi}$. Suppose that $\chi=\chi\lr{P,\pi}$. Then
\begin{equation}
\bol{w}=\chi_{P}\,\nabla P\cp\nabla\pi=\nabla P\cp\nabla\lr{\int{\chi_{P}}\,d\pi}.
\end{equation}
Thus $\zeta=\int{\chi_{P}}\,d\pi$. 
Equation \eqref{IMHD3} follows by the same argument used in proposition \ref{prop5}.
\end{proof}

\section{General steady Euler flows}
In this section we study the local solvability of equation \eqref{Beq}, 
\begin{equation}
\BN{w}=\nabla\lr{P+\kappa\,\bol{w}^2}~~~~{\rm in}~~\Omega.\label{Eq3}
\end{equation}
We define 
\begin{equation}
\mc{P}=P+\kappa\,\bol{w}^2.
\end{equation}
The following proposition casts the steady equation \eqref{Eq3} 
into a time-dependent Hamiltonian system for the local coordinates
spanning the local solution $\bol{w}$. 

\begin{proposition}\label{prop7}
Let $\bol{w}\in C^{\infty}\lr{\Omega}$ be a smooth vector field in a bounded domain $\Omega\subset\mathbb{R}^3$ with $h=\bol{w}\cdot\cu\bol{w}\neq 0$ in $\Omega$.
Then $\bol{w}$ satisfies equation \eqref{Eq3} if and only if for every $\bol{x}\in\Omega$ there exists a neighborhood $U\subset\Omega$ of $\bol{x}$ and local coordinates $\lr{\mu,\lambda,C}\in C^{\infty}\lr{\Omega}$ such that 
\begin{subequations}\label{EulerH}
\begin{align}
\dot{\lambda}&=\bol{w}\cdot\nabla\lambda=-\frac{\p \mc{P}}{\p C},\\
\dot{C}&=\bol{w}\cdot\nabla C=\frac{\p \mc{P}}{\p \lambda},\\
0&=\frac{\p \mc{P}}{\p\mu},
\end{align}
\end{subequations}
and
\begin{equation}
\bol{w}=\nabla\mu+\lambda\,\nabla C~~~~{\rm in}~~U.\label{locrep}
\end{equation}
\end{proposition}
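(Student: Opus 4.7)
The plan is to reduce equation \eqref{Eq3} to a componentwise identity in a local basis supplied by the Lie--Darboux theorem and to read the Hamiltonian-like system \eqref{EulerH} off the coefficients.

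First I would invoke the Lie--Darboux theorem, in exactly the form already used in Proposition \ref{prop3} and in Theorem 1 of \cite{Sato}: since $\bol{w}\neq\bol{0}$ (a consequence of $h\neq 0$) and $\cu\bol{w}\neq\bol{0}$ (same reason), the 1-form dual to $\bol{w}$ is in Darboux canonical form locally, giving smooth functions $\mu,\lambda,C\in C^{\infty}\lr{U}$ such that \eqref{locrep} holds on a neighborhood $U$ of any $\bol{x}\in\Omega$. A direct computation then yields $\cu\bol{w}=\nabla\lambda\cp\nabla C$, so that $h=\nabla\mu\cdot\lr{\nabla\lambda\cp\nabla C}$. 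The hypothesis $h\neq 0$ is therefore equivalent to the statement that the gradients $\nabla\mu,\nabla\lambda,\nabla C$ are linearly independent on $U$, i.e.\ $\lr{\mu,\lambda,C}$ is a bona fide local coordinate system.

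Next I would compute the left-hand side of \eqref{Eq3} by the standard triple product identity:
\begin{equation*}
\bol{w}\cp\lr{\cu\bol{w}}=\lr{\nabla\mu+\lambda\,\nabla C}\cp\lr{\nabla\lambda\cp\nabla C}=\lr{\bol{w}\cdot\nabla C}\nabla\lambda-\lr{\bol{w}\cdot\nabla\lambda}\nabla C=\dot{C}\,\nabla\lambda-\dot{\lambda}\,\nabla C.
\end{equation*}
On the other hand, in the basis $\left\{\nabla\mu,\nabla\lambda,\nabla C\right\}$ one has $\nabla\mc{P}=\mc{P}_{\mu}\,\nabla\mu+\mc{P}_{\lambda}\,\nabla\lambda+\mc{P}_{C}\,\nabla C$. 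Equation \eqref{Eq3} therefore collapses to
\begin{equation*}
\mc{P}_{\mu}\,\nabla\mu+\lr{\mc{P}_{\lambda}-\dot{C}}\nabla\lambda+\lr{\mc{P}_{C}+\dot{\lambda}}\nabla C=\bol{0},
\end{equation*}
and linear independence of the three gradients forces each coefficient to vanish. These are exactly the three equations of \eqref{EulerH}. The converse is then a matter of running the same chain of identities in reverse: starting from \eqref{locrep} and \eqref{EulerH}, the computation above reassembles $\bol{w}\cp\cu\bol{w}$ as $\nabla\mc{P}$.

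The only nontrivial step is the appeal to the Lie--Darboux theorem, but this has already been carried out in the paper and is imported as a black box. Everything else is algebraic matching of coefficients in a coordinate basis, so I do not expect any substantial obstacle; care is needed only to verify that $h\neq 0$ both legitimizes the Darboux normal form and guarantees that $\lr{\mu,\lambda,C}$ span the cotangent space pointwise, which is exactly what allows the coefficient-by-coefficient identification.
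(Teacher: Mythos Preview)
Your proof is correct and follows essentially the same route as the paper: invoke the Lie--Darboux representation $\bol{w}=\nabla\mu+\lambda\,\nabla C$, use $h\neq 0$ to certify that $\lr{\mu,\lambda,C}$ are genuine local coordinates, and then match the components of $\bol{w}\cp\lr{\cu\bol{w}}=\nabla\mc{P}$. The only cosmetic difference is that the paper rewrites $\cu\bol{w}=h\,\p_{\mu}$ and projects the resulting equation onto the tangent basis $\lr{\p_{\mu},\p_{\lambda},\p_{C}}$, whereas you expand both sides in the cotangent basis $\left\{\nabla\mu,\nabla\lambda,\nabla C\right\}$ and equate coefficients directly; these are dual ways of extracting the same three scalar identities.
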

Here, the notation $\dot{f}=\bol{w}\cdot\nabla f$ was introduced to 
denote the rate of change of a function $f\in C^{\infty}\lr{\Omega}$ 
along the flow generated by $\bol{w}$.

\begin{proof}
First we prove that \eqref{Eq3} implies \eqref{EulerH} and \eqref{locrep}.
Equation \eqref{locrep} is a consequence of the Lie-Darboux theorem
(see theorem 1 of \cite{Sato}). The condition $h\neq 0$ ensures that 
the Jacobian of the change of variables $\lr{\mu,\lambda,C}\mapsto\lr{x,y,z}$ does not vanish,
making $\lr{\mu,\lambda,C}$ a coordinate system in $U$.
Substituting \eqref{locrep} in equation \eqref{Eq3}, we have
\begin{equation}
\bol{w}\cp\lr{\nabla\lambda\cp\nabla C}=\nabla\mc{P}~~~~{\rm in}~~U.\label{Euler2}
\end{equation}
Next, observe that
\begin{equation}
h=\JI{w}=\nabla\mu\cdot\nabla\lambda\cp\nabla C\neq 0~~~~{\rm in}~~U.
\end{equation}
Hence, the tangent basis $\lr{\p_{\mu},\p_{\lambda},\p_{C}}$ can be expressed as
\begin{equation}
\p_{\mu}=h^{-1}\nabla\lambda\cp\nabla C,~~~~
\p_{\lambda}=h^{-1}\nabla C\cp\nabla \mu,~~~~
\p_{C}=h^{-1}\nabla\mu\cp\nabla C.
\end{equation}
Then, equation \eqref{Euler2} becomes
\begin{equation}
h\,\bol{w}\cp\p_{\mu}=\nabla\mc{P}~~~~{\rm in}~~U.
\end{equation}
Projecting this equation on the tangent basis $\lr{\p_{\mu},\p_{\lambda},\p_{C}}$
we obtain system \eqref{EulerH}. This proves the first implication.
The proof of the converse statement is trivial.
\end{proof}
Note that system \eqref{EulerH} has a Hamiltonian form 
with Hamiltonian $\mc{P}=\mc{P}\lr{\lambda,C}$ and phase space coordinates $\lr{p,q}=\lr{\lambda,C}$.
A similar result applies even if $h$ is allowed to vanish but $\cu\bol{w}\neq\bol{0}$ in $\Omega$: 
since $\nabla\mc{P}\cdot\cu\bol{w}=0$, we still have $\mc{P}=\mc{P}\lr{\lambda,C}$ in $U$. Then,  
\begin{equation}
\begin{split}
\BN{w}&=\bol{w}\cp\lr{\nabla\lambda\cp\nabla C}\\&=\lr{\bol{w}\cdot\nabla C}\,\nabla\lambda-\lr{\bol{w}\cdot\nabla\lambda}\,\nabla C=\mc{P}_{\lambda}\,\nabla\lambda+\mc{P}_{C}\,\nabla C.
\end{split}
\end{equation}
This gives an Hamiltonian system $\dot{\lambda}=-\mc{P}_{C}$, $\dot{C}=\mc{P}_{\lambda}$. However, in this case $\lr{\mu,\lambda,C}$ does not represent a coordinate system in $U$. 
The case, $\cu\bol{w}=\bol{0}$ in $\Omega$ is trivial, since it implies $\bol{w}=\nabla f$ for some $f\in C^{\infty}\lr{\Omega}$, and $\mc{P}={\rm constant}$. 
Again, proposition \ref{prop7} and the resulting local representation for $\bol{w}$ suggests the use of orthogonal coordinates to construct nontrivial solutions of \eqref{Eq3}. Indeed, observe that, if the variables $\lr{\mu,\lambda,C}$ are mutually orthogonal, system \eqref{EulerH} reduces to
\begin{subequations}
\begin{align}
0&=-\frac{\p\mc{P}}{\p C},\\
\lambda\abs{\nabla C}^2&=\frac{\p\mc{P}}{\p\lambda},\\
0&=\frac{\p\mc{P}}{\p\mu}.
\end{align}
\end{subequations}
We have the following:

\begin{proposition}\label{ConstrGen}
Let $\lr{\mu,\lambda,C}\in C^{\infty}\lr{\Omega}$ be an orthogonal
coordinate system such that, in $\Omega$,
\begin{subequations}
\begin{align}
\frac{1}{2}\abs{\nabla C}^2&=\frac{\p f}{\p\lambda^2},\label{dC}\\
\frac{1}{2}\abs{\nabla\mu}^2&=-P+f-\lambda^2\frac{\p f}{\p\lambda^2}+c,\label{dmu}
\end{align}
\end{subequations}
where $f\in C^{\infty}\lr{\Omega}$ is a function of $\lambda^2$, i.e. $f=f\lr{\lambda^2}$, and $c\in\mathbb{R}$ a real constant. Then, the vector field
\begin{equation}
\bol{w}=\nabla\mu+\lambda\,\nabla C,\label{w2}
\end{equation}
is a solution of \eqref{Eq3}. Furthermore, $\lambda$ is an invariant of the flow generated by \eqref{w2}.
\end{proposition}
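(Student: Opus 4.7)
The natural approach is to apply Proposition \ref{prop7} to the explicit ansatz $\bol{w}=\nabla\mu+\lambda\,\nabla C$, which is already of the local form \eqref{locrep}. The orthogonality of $\lr{\mu,\lambda,C}$ ensures that the Jacobian $h=\nabla\mu\cdot\nabla\lambda\cp\nabla C$ is nonzero in $\Omega$, so the nondegeneracy hypothesis of Proposition \ref{prop7} holds; it then remains only to check that the Hamiltonian conditions \eqref{EulerH} are satisfied with $\mc{P}=P+\kappa\,\bol{w}^2$. Consistently with the fluid setting of Section 6, I take $\kappa=1/2$.

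First I compute $\bol{w}^2$. By orthogonality, the cross term $\nabla\mu\cdot\nabla C$ vanishes, so $\bol{w}^2=\abs{\nabla\mu}^2+\lambda^2\abs{\nabla C}^2$. Substituting hypotheses \eqref{dC} and \eqref{dmu} into $\mc{P}=P+\tfrac{1}{2}\bol{w}^2$, the $P$-dependence cancels and one obtains
\[
\mc{P} = P + \lr{-P + f - \lambda^2\,\p f/\p\lambda^2 + c} + \lambda^2\,\p f/\p\lambda^2 = f\lr{\lambda^2}+c.
\]
Hence $\mc{P}$ depends on $\lambda$ alone, so the conditions $\mc{P}_\mu=0$ and $\mc{P}_C=0$ from \eqref{EulerH} are automatic.

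It then remains to verify the two evolution equations in \eqref{EulerH}. By orthogonality, $\dot\lambda=\bol{w}\cdot\nabla\lambda=\nabla\mu\cdot\nabla\lambda+\lambda\,\nabla C\cdot\nabla\lambda=0$, which matches $-\mc{P}_C=0$ and establishes the claimed invariance of $\lambda$ along the flow. For the second equation, the chain rule gives $\mc{P}_\lambda=2\lambda\,\p f/\p\lambda^2$, while $\dot C=\bol{w}\cdot\nabla C=\lambda\abs{\nabla C}^2=2\lambda\,\p f/\p\lambda^2$ by \eqref{dC}, so $\dot C=\mc{P}_\lambda$ as required.

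I do not expect any real obstacle: the argument is essentially algebraic, since hypotheses \eqref{dC}--\eqref{dmu} have been reverse-engineered so that $P+\tfrac{1}{2}\bol{w}^2$ collapses to a function of $\lambda$ only. The one place where care is needed is the consistent use of the chain rule $\p_\lambda f\lr{\lambda^2}=2\lambda\,\p f/\p\lambda^2$ on both sides of the equation for $\dot C$, and the implicit choice $\kappa=1/2$, without which the $P$-terms would fail to cancel in the computation of $\mc{P}$.
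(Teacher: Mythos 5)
Your proof is correct, and it takes a mildly different route from the paper's. The paper proves Proposition \ref{ConstrGen} by a self-contained direct computation: using orthogonality and \eqref{dC} it writes $\bol{w}^2=\abs{\nabla\mu}^2+2\lambda^2\,\p f/\p\lambda^2$, notes from \eqref{dmu} that $\nabla f=\nabla\lr{P+\tfrac{1}{2}\bol{w}^2}$, and separately computes $\bol{w}\cp\lr{\cu\bol{w}}=\tfrac{1}{2}\abs{\nabla C}^2\nabla\lambda^2=\nabla f$, which combines to give \eqref{Eq3}; invariance of $\lambda$ again follows from orthogonality. You instead feed the ansatz into the Hamiltonian criterion of Proposition \ref{prop7}: after checking $h=\nabla\mu\cdot\nabla\lambda\cp\nabla C\neq 0$ (valid, since an orthogonal coordinate system has nonvanishing Jacobian), you verify \eqref{EulerH} directly, using that $\mc{P}=P+\tfrac{1}{2}\bol{w}^2=f\lr{\lambda^2}+c$ collapses to a function of $\lambda$ alone. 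The two arguments rest on the same algebra (orthogonality kills the cross terms, \eqref{dC} matches $\lambda\abs{\nabla C}^2$ with $\mc{P}_{\lambda}$), but your packaging buys brevity and makes the Hamiltonian structure explicit, at the cost of invoking the nondegeneracy hypothesis $h\neq 0$ of Proposition \ref{prop7}, which the paper's direct verification never needs. Your explicit remark that $\kappa=1/2$ is required for the $P$-terms to cancel is well taken: the paper's own proof makes the same choice silently (its equation for $\nabla f$ is exactly the $\kappa=1/2$ case), so you are faithful to the intended statement and in fact more transparent about its scope.
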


\begin{proof}
Since $\lr{\mu,\lambda,C}$ is orthogonal,
\begin{equation}
\bol{w}^2=
\abs{\nabla\mu}^2+2\lambda^2\frac{\p f}{\p\lambda^2}.
\end{equation}
Here we used equation \eqref{dC}.
From equation \eqref{dmu}, we thus have
\begin{equation}
\nabla f=\nabla\lr{P+\frac{1}{2}\bol{w}^2}.\label{df}
\end{equation}
On the other hand,
\begin{equation}
\bol{w}\cp\lr{\cu\bol{w}}=\frac{1}{2}\abs{\nabla C}^2\nabla\lambda^2=\nabla f.\label{wxdw}
\end{equation}
In the last passage we used equation \eqref{dC}. 
Combining equations \eqref{df} and \eqref{wxdw} gives the desired result.
Finally, $\lambda$ is an invariant of the flow generated by \eqref{w2} because the coordinate system $\lr{\mu,\lambda,C}$ is orthogonal by construction.
\end{proof}

Proposition \ref{ConstrGen} can be used to construct classes of solutions
with certain topologies:

\begin{proposition}\label{Construction}
Let $\lr{\alpha,\beta,\gamma}\in C^{\infty}\lr{\Omega}$ be an orthogonal coordinate system such that
$\abs{\nabla\alpha}=k$, with $k$ a positive real constant, and $\p_{\alpha}\abs{\nabla\beta}=\p_{\alpha}\abs{\nabla\gamma}=0$. 
Then, for every $\bol{x}\in\Omega$ and for every $P\in C^{\infty}\lr{\ov{\Omega}}$ with $P=P\lr{\beta,\gamma}$ a smooth function of $\beta$ and $\gamma$, there exist a neighborhood $U\subset\Omega$ of $\bol{x}$, a positive real constant $c$, and local coordinates $\lr{\mu,\lambda,C}\in C^{\infty}\lr{U}$ satisfying
\begin{subequations}
\begin{align}
\frac{1}{2}\abs{\nabla\mu}^2&=c-P,\\
\nabla\mu\cdot\nabla\lambda&=0,\label{dmudlambda}\\
C&=\sqrt{2}\,\frac{\alpha}{k},
\end{align}
\end{subequations} 
such that the vector field
\begin{equation}
\bol{w}=\nabla\mu+\lambda\,\nabla C~~~~{\rm in}~~U,\label{wloc}
\end{equation}
is a local solution of \eqref{Eq3}. Furthermore $\lambda$ is a local invariant of the flow generated by \eqref{wloc}. 
\end{proposition}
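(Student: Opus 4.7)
The plan is to construct $(\mu,\lambda,C)$ so that Proposition \ref{ConstrGen} applies with the simple choice $f\lr{\lambda^2}=\lambda^2$. Setting $C=\sqrt{2}\,\alpha/k$, the hypothesis $\abs{\nabla\alpha}=k$ gives $\abs{\nabla C}^2=2$, so that $\frac{1}{2}\abs{\nabla C}^2=1=\p f/\p\lambda^2$; this verifies \eqref{dC}. With this choice of $f$, the combination $f-\lambda^2\,\p f/\p\lambda^2$ vanishes, so \eqref{dmu} collapses to $\frac{1}{2}\abs{\nabla\mu}^2=c-P$, exactly the condition required by the statement. Thus, if $\lr{\mu,\lambda,C}$ can be realized as an orthogonal coordinate system in some neighborhood, Proposition \ref{ConstrGen} immediately delivers the conclusion.

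It remains to build $\mu$ and $\lambda$ as functions of $\beta,\gamma$ alone, so that their gradients are automatically orthogonal to $\nabla C\propto\nabla\alpha$ by orthogonality of $\lr{\alpha,\beta,\gamma}$. The assumption $\p_{\alpha}\abs{\nabla\beta}=\p_{\alpha}\abs{\nabla\gamma}=0$ is precisely what makes this reduction consistent: for any $\mu=\mu\lr{\beta,\gamma}$ the expression $\abs{\nabla\mu}^2=\mu_{\beta}^2\abs{\nabla\beta}^2+\mu_{\gamma}^2\abs{\nabla\gamma}^2$ depends only on $\beta,\gamma$. The equation $\frac{1}{2}\abs{\nabla\mu}^2=c-P\lr{\beta,\gamma}$ is thus an intrinsic two-dimensional eikonal equation in $\beta,\gamma$, which admits a local smooth solution by the method of characteristics provided the right-hand side is strictly positive; this is arranged by choosing $c>\sup_{U}P$ after shrinking $U$ if necessary. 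Non-characteristic initial data also ensures $\nabla\mu\neq\bol{0}$ in $U$.

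Given such $\mu$, I would then define $\lambda=\lambda\lr{\beta,\gamma}$ by solving the linear first-order transport equation $\mu_{\beta}\lambda_{\beta}\abs{\nabla\beta}^2+\mu_{\gamma}\lambda_{\gamma}\abs{\nabla\gamma}^2=0$ (equivalently $\nabla\mu\cdot\nabla\lambda=0$) via characteristics, prescribing non-constant Cauchy data on a curve transverse to the $\nabla\mu$-streamlines. This yields \eqref{dmudlambda} together with $\nabla\lambda\neq\bol{0}$ in $U$. By construction, the three gradients $\nabla\mu$, $\nabla\lambda$, $\nabla C$ are mutually orthogonal and non-vanishing, so $\lr{\mu,\lambda,C}$ is a bona fide orthogonal coordinate system on $U$; all hypotheses of Proposition \ref{ConstrGen} are satisfied, and invoking it shows that $\bol{w}=\nabla\mu+\lambda\nabla C$ solves \eqref{Eq3} with $\lambda$ an invariant of the flow. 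The main obstacle is the eikonal step: guaranteeing simultaneous local smoothness, positivity of $c-P$, and the transversality needed to carry $\nabla\mu$ and $\nabla\lambda$ along a common characteristic-free neighborhood; all of these follow from standard local Hamilton--Jacobi theory once $U$ is taken small enough.
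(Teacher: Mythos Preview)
Your proposal is correct and follows essentially the same route as the paper: apply Proposition~\ref{ConstrGen} with $f(\lambda^2)=\lambda^2$, set $C=\sqrt{2}\,\alpha/k$ so that \eqref{dC} holds, solve the two-dimensional eikonal equation $\tfrac{1}{2}\abs{\nabla\mu}^2=c-P$ for $\mu=\mu(\beta,\gamma)$ by characteristics (using $\p_{\alpha}\abs{\nabla\beta}=\p_{\alpha}\abs{\nabla\gamma}=0$ to reduce to $(\beta,\gamma)$), then solve the linear transport equation $\nabla\mu\cdot\nabla\lambda=0$ for $\lambda=\lambda(\beta,\gamma)$, again by characteristics. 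Your write-up is in fact slightly more explicit than the paper's about why the eikonal step is well-posed (positivity of $c-P$, non-characteristic data) and why the resulting triple is genuinely a coordinate system, but the strategy and the key ingredients are identical.
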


\begin{proof}
First observe that, since $P$ is continuous in a closed bounded interval, it is bounded by some positive real constant $c$, i.e. $P< c$ in $\Omega$.
Now we apply proposition \eqref{ConstrGen} by setting $f=\lambda^2$.
Then equation \eqref{dC} can be satisfied by choosing $C=\sqrt{2}\,\alpha/k$.
Next, we look for a solution $\mu=\mu\lr{\beta,\gamma}$ of equation \eqref{dmu}.
Such solution exists, at least locally, because $\abs{\nabla\mu}$ does not depend on the variable $\alpha$, and the method of characteristics can be applied. Furthermore, it satisfies $\abs{\nabla\mu}^2=2\lr{c-P}$.
Once $\mu$ is obtained, we can apply again the method of characteristics
to calculate $\lambda\lr{\beta,\gamma}$ from equation \eqref{dmudlambda}. Then the vector field \eqref{wloc} satisfies equation \eqref{Eq3} in some appropriate neighborhood $U\subset\Omega$ around any point $\bol{x}\in\Omega$. Furthermore, since the cotangent vectors $\lr{\nabla\mu,\nabla\lambda,\nabla C}$ are orthogonal by construction, $\bol{w}\cdot\nabla\lambda=0$. Thus $\lambda$ is a local invariant of the flow generated by \eqref{wloc}. Finally, the variables $\lr{\mu,\lambda,C}$ form a local coordinate system such that the Jacobian of the coordinate transformation is
\begin{equation}
h=\JI{w}=\nabla\mu\cdot\nabla\lambda\cp\nabla C=\sqrt{c-P}\,\abs{\nabla\lambda}~~~~{\rm in}~~U.
\end{equation}
Here, we assumed that the orientation of the coordinate system $\lr{\mu,\lambda,C}$ is such that $h/\abs{h}=1$.
\end{proof}

%

\subsection{Examples}

Below we provide a list of solutions to \eqref{Eq3} obtained
by application of propositions \ref{ConstrGen} and \ref{Construction}.

\begin{enumerate}

\item Let $\lr{x,y,z}$ be the standard Cartesian coordinate system. 
We assume that $P\in C^{\infty}\lr{\ov{\Omega}}$ is a function of the spatial coordinate $x$, $P=P\lr{x}$.
Let $c\in\mathbb{R}$ be a positive real constant such that $c> P$ in $\Omega$. Then, the vector field
\begin{equation}
\bol{w}=\sqrt{2}\left[\nabla\lr{\int{\sqrt{c-P}}\,dx}+y\,\nabla z\right],
\end{equation}
is a solution to \eqref{Eq3}. Furthermore, $\nabla\cdot\bol{w}=\sqrt{2}\frac{\p}{\p x}\lr{\sqrt{c-P}}$, which vanishes for $P=c-k^2$, $k\in\mathbb{R}$. $y$ is an invariant of the flow generated by $\bol{w}$.

\item Let $\lr{x,y,z}$ be the standard Cartesian coordinate system.
We assume that $P\in C^{\infty}\lr{\ov{\Omega}}$ is a function of the spatial coordinates $x$ and $y$, $P=P\lr{x,y}$. 
A local solution of \eqref{Eq3} can be obtained by setting $C=\sqrt{2}\,z$, and by solving the following equations for $\mu=\mu\lr{x,y}$ and $\lambda=\lambda\lr{x,y}$ (see proposition \ref{Construction}),
\begin{subequations}
\begin{align}
&\frac{1}{2}\abs{\nabla\mu}^2=c-P\lr{x,y},\\
&\nabla\mu\cdot\nabla\lambda=0.
\end{align}
\end{subequations}
Then, the vector field
\begin{equation}
\bol{w}=\nabla\mu+\sqrt{2}\,\lambda\,\nabla z,
\end{equation}
is the desired solution. Furthermore, $\lambda$ is an invariant of the flow generated by such $\bol{w}$. For example, if $P=-\exp\lr{x+y}$, we obtain
\begin{equation}
\bol{w}=2\,\nabla\left[\exp\lr{\frac{x+y}{2}}\right]+\sqrt{2}\,\lr{x-y}\,\nabla z.
\end{equation}
We have $\di\bol{w}=\exp\lr{\frac{x+y}{2}}\neq 0$.

\item Let $\lr{r,\phi,z}$ be the cylindrical coordinate system introduced in section 3.
We assume that $P\in C^{\infty}\lr{\ov{\Omega}}$ is a function of the radial coordinate, $P=P\lr{r}$. Let $c\in\mathbb{R}$ be a positive real constant such that $c> P$ in $\Omega$. Then, the vector field
\begin{equation}
\bol{w}=\sqrt{2}\left[\nabla\lr{\int{\sqrt{c-P}}\,dr}+\phi\,\nabla z\right],
\end{equation}
is a solution to \eqref{Eq3}. Furthermore, $\nabla\cdot\bol{w}=\sqrt{2}r^{-1}\frac{\p}{\p r}\lr{r\sqrt{c-P}}$, which vanishes for $P=c-k^2/r^2$, $k\in\mathbb{R}$. $\phi$ is an invariant of the flow generated by $\bol{w}$.

\item Let $\lr{r,\phi,z}$ be the cylindrical coordinate system of the previous example.
We assume that $P\in C^{\infty}\lr{\ov{\Omega}}$ is a function of the coordinate $z$, $P=P\lr{z}$. Let $c\in\mathbb{R}$ be a positive real constant such that $c> P$ in $\Omega$. Then, the vector field
\begin{equation}
\bol{w}=\sqrt{2}\left[\nabla\lr{\int{\sqrt{c-P}}\,dz}+\phi\,\nabla r\right],
\end{equation}
is a solution to \eqref{Eq3}. $\phi$ is an invariant of the flow generated by $\bol{w}$.
We have $\di\bol{w}=-\frac{P_{z}}{\sqrt{2\lr{c-P}}}+\sqrt{2}\frac{\phi}{r}$.

\item Let $\lr{r,\phi,z}$ be the cylindrical coordinate system of the previous example. We assume that $P\in C^{\infty}\lr{\ov{\Omega}}$ is a function of the coordinates $r$ and $\phi$, $P=P\lr{r,\phi}$. 
A local solution of \eqref{Eq3} can be obtained by setting $C=\sqrt{2}\,z$, and by solving the following equations for $\mu=\mu\lr{r,\phi}$ and $\lambda=\lambda\lr{r,\phi}$ (see proposition \ref{Construction}),
\begin{subequations}
\begin{align}
&\frac{1}{2}\abs{\nabla\mu}^2=c-P\lr{r,\phi},\\
&\nabla\mu\cdot\nabla\lambda=0.
\end{align}
\end{subequations}
Then, the vector field
\begin{equation}
\bol{w}=\nabla\mu+\sqrt{2}\,\lambda\,\nabla z,
\end{equation}
is the desired solution. Furthermore, $\lambda$ is an invariant of the flow generated by such $\bol{w}$. For example, if $P=-\frac{1}{2}e^{-2r}\left[1+\cos^2{\phi}\lr{r^{-2}-1}\right]$, we obtain
\begin{equation}
\bol{w}=\nabla\left[e^{-r}\sin\phi\right]+\sqrt{2}\,e^{1/r}\cos\phi\,\nabla z.
\end{equation}
We have $\di\bol{w}=\frac{e^{-r}}{r}\lr{\sin\phi-r^{-2}\cos{\phi}}$.



\end{enumerate}

\section{Concluding remarks}

In this paper, the local theory of solution for steady ideal fluid and magneto-fluid equilibria
was studied. A local theory provides geometric conditions for the existence 
of solutions in a small neighborhood around a chosen point in the domain of interest.
Due to the appearance of solenoidal and non-integrable vector fields in the 
fluid equations, the local theory relies on the Lie-Darboux theorem of differential geometry, 
and the Frobenius integrability conditions for smooth differential forms of order 1. 
 
Here, we extended the local theory of representation and construction of Beltrami fields
developed in \cite{Sato} to solenoidal Beltrami fields, generalized Beltrami fields, ideal MHD equilibria, and general steady ideal Euler flows.
Regarding solenoidal Beltrami fields, we proved a theorem 
that enables
the construction of harmonic orthogonal coordinates. Such coordinates can be used to
obtain families of solenoidal Beltrami fields. Several analytic examples pertaining to Cartesian, 
cylindrical, and spherical geometry were given.
We discussed the existence of singular solutions to the boundary value problem
for solenoidal Beltrami fields, and found that, while regular solutions preserving the local representation
 are not admissible, if the metric coefficients of the relevant coordinate system satisfy certain geometric conditions, singular solutions exist. An example of singular solution in a spherical domain was given. 
We further showed that the problem of solving for a generalized Beltrami field
can always be reduced to the problem of finding a Beltrami field, 
and obtained
explicit examples of generalized Beltrami fields in Cartesian and cylindrical coordinates.
We proved a local representation theorem for ideal MHD equilibria, 
and
showed that, locally, they can always be represented as the solution of a pair of
coupled second order partial differential equations. 
Finally, we examined local conditions for existence of solutions in the case of
steady ideal Euler flows, and showed that they can be locally represented as a Hamiltonian system with two degrees of freedom. 
Several explicit solutions were given.


\section{Acknowledgments}

\noindent The research of N. S. was supported by JSPS KAKENHI Grant No. 18J01729, 
and that of M. Y. by JSPS KAKENHI Grant No. 17H02860.


\end{document}